\let\proof\relax 
\let\endproof\relax
\newtheorem{Proposition}{Proposition}
\newtheorem{theorem}{Theorem}
\newtheorem{remark}{Remark}
\newtheorem{definition}{Definition}
\newtheorem{corollary}{Corollary}
\DeclareMathOperator*{\argmax}{arg\,max}
\DeclareMathOperator{\Tr}{Tr}
\DeclareMathOperator{\Id}{Id}
\title{\LARGE \bf
Cross-Layer Coordinated Attacks on Cyber-Physical Systems: A LQG Game Framework with Controlled Observations
}
\author{Yunhan Huang$^{1}$, Zehui Xiong$^{2}$ and Quanyan Zhu$^{1}$
\thanks{$^{1}$ Y. Huang and Q. Zhu are with the Department of Electrical and Computer Engineering,
        New York University, 370 Jay St., Brooklyn, NY.
        {\tt\small \{yh.huang, qz494\}@nyu.edu}}
\thanks{$^{2}$ Z. Xiong is with the Pillar of Information Systems Technology and Design, Singapore University of Technology and Design, Singapore.
        {\tt\small zehui.xiong@ieee.org}     }
}
\begin{document}

\maketitle
\thispagestyle{plain}
\pagestyle{plain}

\begin{abstract}

This work establishes a game-theoretic framework to study cross-layer coordinated attacks on cyber-physical systems (CPSs). The attacker can interfere with the physical process and launch jamming attacks on the communication channels simultaneously. At the same time, the defender can dodge the jamming by dispensing with observations. The generic framework captures a wide variety of classic attack models on CPSs. Leveraging dynamic programming techniques, we fully characterize the Subgame Perfect Equilibrium (SPE) control strategies. We also derive the SPE observation and jamming strategies and provide efficient computational methods to compute them. The results demonstrate that the physical and cyber attacks are coordinated and depend on each other.

On the one hand, the control strategies are linear in the state estimate, and the estimate error caused by jamming attacks will induce performance degradation. On the other hand, the interactions between the attacker and the defender in the physical layer significantly impact the observation and jamming strategies.  Numerical examples illustrate the exciting interactions between the defender and the attacker through their observation and jamming strategies.
\end{abstract}

\section{Introduction}

Recent progress in information and communications technologies (ICT) such as the Internet of Things (IoT) and 5G high-speed cellular networks have enhanced the connectivity among physical systems and cyber systems. However, the increasing connectivity also brings with these systems heightened concern about trustworthiness. There is an urgent need for understanding security, privacy, safety, reliability, resilience, and corresponding assurance for CPSs. Due to the multi-layer and multi-stage nature of CPSs, a cross-layer cross-stage framework is a sine qua non to understand the trustworthiness of CPSs. Most existing works on the security of CPSs often focus independently on either the physical system or the cyber system. One common assumption is that adversaries can only launch one particular type of attack at a time. For example, in \cite{feng2020networked,foroush2012event, gupta2010optimal, zhang2015optimal,xu2017game}, the authors have considered DoS attacks that jam either the observation or the control signals to deteriorate the performance of the underlying system. \cite{mo2013detecting} focuses only on data injection attacks on the sensors of a control system. Studies in \cite{zhu2013performance} pivot purely on the replay attacks on the operator-to-actuator channel. However, in CPSs, adversaries can leverage both cyber and physical vulnerabilities to launch coordinated attacks \cite{zhu2018multi,rass2017physical}. For example, an advanced adversary can simultaneously compromise critical sensors and control units to damage targeted CPS assets.

In this work, we build a dynamic game-theoretic framework that can incorporate various attack models, where the attacker conducts both physical interferences (e.g., through either direct physical intervention or cyber hacking/data injection) and jamming attacks on observation channel. The attacker has to intelligently coordinate her/his attacks across both the physical and the cyber layers over a finite period of time to maximize the system degradation with minimum effort.  The defender, e.g., the controller/operator, implements her/his own control and has to, at each time, decide whether to observe or not. Each observation request made by the defender is associated with an observation cost. The cost can capture the limited network resources, such as power, communication, and bandwidth \cite{huang2019continuous}. For example, radar measurement requires megawatts of power for sensing in military applications. Thus, the purposes of the defender not observing are two fold: One is to save limited resources; the other is to dodge the jamming. The physical process is described by a linear dynamical system with additive white noise. The observation is partial and noisy, whose availability depends on the observation decisions of the defender and the jamming policies of the attacker.

Dynamic games have long been used to capture the cross-layer multi-stage nature of CPSs and the competing nature between the system operator and the adversary \cite{yunhanhuang2020dynamic}. At the cyber layer of the CPSs, the dynamic games are used to model the cyber kill chains of APTs that include reconnaissance, lateral movement, and command and control. These games are often built over graphical models such as computer networks \cite{noureddine2016game,horak2017manipulating} and attack graphs \cite{nguyen2017multi}. At the physical layer, the dynamic games are used to describe the interactions between operational technologies (OT) and an adversary. The game-theoretic description of the threat model at the OT level guides the design of security monitoring, and control strategies that aim to reduce the risks on the controlled processes and assets \cite{zhu2018multi}, as well as the development of resilient control mechanisms that mitigate the impact of successful attacks \cite{chen2019control}.

Our work's main contribution is the development of a cross-layer over-stage game-theoretic framework that underpins the study of CPS under coordinated simultaneous cross-layer attacks. The generic framework captures various attack models. Its connections with existing attack models will be discussed in \Cref{remark:versatility}. 

We study the SPE strategies of this dynamic game and fully characterize the SPE strategies via two dynamic programming equations. The theoretical results show that the control strategies are linear in the state estimate. The control gain can be computed offline, independent of the observation and the jamming strategies. The effect of jamming attacks causes the estimation error and leads to significant system degradation. The capability of physical attacks also affects the jamming and observation decisions. The SPE strategies show that the defender does not observe when the observation cost surpasses system degradation caused by estimation error.  When the attacker can deploy physical attacks with low costs, there is no incentive for the defender to observe even when the observation cost is zero. Otherwise, the attacker can leverage the observation to launch more accurate physical attacks. Besides, we show that the defender makes an observation even when the defender anticipates the jamming to incur a higher cost to the attacker. More results regarding the jamming and the observation decisions will be discussed in \Cref{Subsec:ObservationStrategies} and \Cref{sec:NumericalStudy}. 

For the rest of the paper, \Cref{sec:ProblemForm} provides a mathematical formulation of the problem. \Cref{Sec:TheoreticalResults} presents the theoretical results of the framework, their derivations, and computational methods. In \Cref{sec:NumericalStudy}, we use case studies to illustrate the structures of the observation and jamming strategies.

\section{Problem Formulation}\label{sec:ProblemForm}

We consider a linear dynamical system given by
\begin{equation}\label{Eq:SystemDynamics}
x_{n+1}  = A x_n + B^d u_n^d + B^a u^a_n + C w_n,\ \ \ 0\leq n\leq N-1,
\end{equation}
where $x_n \in\mathbb{R}^q$ is the $q$-dimensional state vector at time $n$; and $u^d_n, u_n^a, w_n$ are vectors of dimensions less than or equal to $q$. Here, $u^d_n$ is the control of the defender while $u^a_n$ is that of the attacker. The system noise at time $n$ is denoted by $w_n$. Moreover, $A$, $B^d$, $B^a$, and $C$ are matrices with appropriate dimensions. The capability of the attacker on affecting the state dynamics can be captured by the adversarial control matrix $B^a$. Here, we consider time-invariant system for notational simplicity, but all results in \Cref{Sec:TheoreticalResults} can be extended to the time-varying case without much endeavor.
The associated observation system is 
\begin{equation}\label{Eq:ObservationSystem}
\begin{aligned}
\tilde{y}_n &= D x_n + E v_n,\\
y^d_n &= h^d(i^d_n,i^a_n) \cdot \tilde{y}_n,\ \textrm{and }y^a_n = h^a(i^d_n,i^a_n) \cdot \tilde{y}_n,
\end{aligned}
\end{equation}
where $v$ and $D,E$ are vectors and matrices with appropriate dimensions. Here, $i_n^d\in \{0,1\}$ (resp. $i^a_n\in \{0,1\}$) is the observation (resp. jamming) decision made by the defender (resp. the attacker). We call $\tilde{y}_n$ the information vector at time $n$. Whether or not the vector $\tilde{y}_n$ is observed by the defender and the attacker is decided by the observation decision $i^d_n$ and the jamming decision $i^a_n$ according to the rule $h:\{0,1\} \times \{0,1\} \rightarrow \{0,1\}$. We suppose, whenever the attacker chooses to jam the observation, the defender will receive no information. The attacker receives the same observation information as the defender does. In this case, $h^d(i^d_n,i^a_n) = h^a(i^d_n,i^a_n) = i^d_n \cdot (1-i_{n}^a)$. Hence, $y^d_n = y^a_n$ and so we use $y_n$ instead in later discussions.

We introduce the notation $X_n = \{x_0,\cdots, x_n\}$ to denote the history of state trajectory up to time $n$. Similarly, we define $U_n^d$, $U^a_n$, $I_n^d$, $I_n^a$, $W_n$, $V_n$, $Y_n^d$, and $Y_n^a$ for $u_n^d$, $u^a_n$, $i^d_n$, $i^a_n$, $w_n$, $v_n$, $y^d_n$, and $y^a_n$, respectively. The sequences $W_{N-1}$ and $V_{N-1}$ are independent stochastic processes with a joint Gaussian probability distribution described by $\mathbb{E}[w_n] = \mathbb{E}[v_n] =0$, $\mathbb{E}[w_nw_{n'}] = \Sigma_s \delta(n-n')$, and $\mathbb{E}[v_nv_{n'}] = \Sigma_o \delta(n-n')$, where $\delta(\cdot)$ is the Kronecker delta. The initial condition $x_0$, independent from the system noise and the observation noise, is Gaussian distributed with mean $\bar{x}_0$ and variance $\Sigma_0$. 

\textbf{\textit{Information:}} The control sequences $U^d_{N-1}$ (resp. $U^a_{N-1}$) and the observation/jamming sequence $I_{N-1}^d$ (resp. $I_{N-1}^a$) are to be generated by the defender (resp. attacker). At time $n$, the observation $i^d_n$ is made based on the information available to the defender, which is denoted by
\begin{equation}\label{Eq:InformationBeforeObservation}
\mathcal{F}_n = \{I_{n-1}^d,I_{n-1}^a,U^d_{n-1},U_{n-1}^a, Y_{n-1}\},\ \ \ n\geq 1,
\end{equation}
with $\mathcal{F}_0 = \varnothing$. So is the jamming decision $i^a_n$. The controls of both defender and attacker are made based on the information available at time $n$ after the observation, which is denoted by
\begin{equation}\label{Eq:InformationAfterObservation}
\bar{\mathcal{F}}_n = \{\mathcal{F}_n, I_n^d, I_n^a, Y_n\}.
\end{equation}

We assume that the state evolution equations, observation equations, noise statistics, cost functions of the controllers, and information structures of the controllers are part of common knowledge among the players. The game is hence a complete information game.

\textbf{\textit{Objectives/Targets:}} The cost functional of the defender and the attacker, involving quadratic costs in state and their controls, as well as the observation and  cost, can be written as
\begin{equation}\label{Eq:CostFunctional}
\begin{aligned}
F^d(\pi^d,\pi^a) = \mathbb{E}&\left[\sum_{n=0}^{N-1} c_n(x_n,u^d_n,u^a_n,i^d_n,i^a_n)  + x_N' Q_N x_N  \right],\\
\end{aligned}
\end{equation}
where 
$$
\begin{aligned}
c_n(x_n,u^d_n,u^a_n,i^d_n,i^a_n) =& x_n' Q_n x_n + {u_n^d}'R_n^d u_n^d - {u^a_n}' R_n^a u_n^a \\
&+ i^d_n O^d_n - i^a_n O^a_n,
\end{aligned}
$$
is the instantaneous cost at stage $n$, and we have $F^a(\pi^d,\pi^a) = - F^d(\pi^d,\pi^a)$. Here, the matrices $R_n^d$, $R_n^a$ are positive definite, the matrix $Q_n^d$ is positive semidefinite, and the scalars $O_n^d$, $O_n^a$ are nonnegative for $n=1,2,\cdots,N-1$. Here, $O_n^d$ represents the observation cost for the defender while $O_n^a$ denotes the jamming cost. For any matrix $M$, $M'$ indicates the transpose of $M$. 

\textbf{\textit{Strategies:}} $\pi^d = (\mu^d,\nu^d)$ is the strategies of the defender, where $\mu^d$ denotes the observation strategy and $\nu^d$ denotes the control strategy. The strategy of the attacker, including the jamming strategy $\mu^a$ and the control strategy $\nu^a$, are denoted by $\pi^a = (\mu^a,\nu^a)$. Given $\pi^d$, at stage $n$, the control and the observation decisions of the defender are generated as $i^d_n = \mu_n^d(\mathcal{F}_{n})$ and $u^d_n = \nu_n^d(\bar{\mathcal{F}_n})$.

The defender aims to stabilize the system with minimum control effort and at the same time observe/sample economically. The attacker possesses an opposing objective, which is to undermine the defender's effort by cross-layer coordinated attacks on both the physical layer and the communication layer. Here, we consider a zero-sum game where the defender and the attacker are strictly competitive. Our results in \Cref{Sec:TheoreticalResults} can be easily extended to a general sum setting.

\begin{remark}\label{remark:versatility}
The framework can also capture various attack models. For example, $R_n^a$ going to infinity means zero physical attacks. Hence, the framework specializes to optimal jamming attacks studied in \cite{zhang2015optimal,gupta2010optimal}; Letting $B^a = B^d$, we can model false data injection attacks in the operator-to-actuator channel \cite{sargolzaei2019detection}; With $h(i_d,i_a) = 1-(1-i_d)(1-i_a)$, the framework describes pursuit-evasion type of security problems with controlled information \cite{huang2021pursuit}, where detecting your opponent's location will expose your own location.
\end{remark}

\section{THEORETICAL RESULTS}\label{Sec:TheoreticalResults}

The non-hierarchical decision making between the defender and the attacker makes Nash equilibrium a natural solution concept for this game. In a Nash equilibrium, none of the players can get better off by unilaterally deviating from the equilibrium.

\begin{definition}
A strategy pair $({\pi^d}^*,{\pi^a}^*)$ is called a Nash equilibrium of the zero-sum dynamic game described by \Cref{Eq:SystemDynamics,Eq:ObservationSystem,Eq:InformationBeforeObservation,Eq:InformationAfterObservation,Eq:CostFunctional} if 
$$
F^d({\pi^d}^*,{\pi^a}) \leq F^d({\pi^d}^*,{\pi^a}^*) \leq F^d({\pi^d},{\pi^a}^*),
$$
for all $\pi^d \in \Pi^d$ and $\pi^a\in\Pi^a$, where $\Pi^d$($\Pi^a$) is the strategy space of the defender(attacker) that contains all possible strategies.
\end{definition}

The characterization and the computation of Nash equilibrium involve solutions of individual optimization problem for both the defender and the attacker. In a finite-horizon dynamic game, the characterization and the computation of the equilibrium are usually obtained by conducting backward induction, which gives rise to the concept of Subgame Perfect Nash Equilibrium (SPNE). The SPNE is a refinement of Nash equilibrium used in dynamic games with perfect information. A game with perfect information is a game where each player is perfectly informed of the history of what has happened so far, up to the point where it is her turn to move. Hence, this game is a game with perfect information. The expected cost-to-go of the defender conditioned on the information set from the beginning of time $k$ to its terminal state at time $N$ is
\begin{equation}\label{Eq:ExpectedCosttoGo}
\begin{aligned}
f_k^d(\mathcal{F}_k) = \mathbb{E}&\left[\sum_{n=k}^{N-1}c_n(x_n,u^d_n,u^a_n,i^d_n,i^a_n) + x_N' Q_N x_N \middle\vert \mathcal{F}_k  \right],
\end{aligned}
\end{equation}
for $k = 0,1,\cdots,N$. The expected cost-to-go functional of the attacker is hence $f_k^a(\mathcal{F}_k) =  -f^d_k(\mathcal{F}_k)$. For each stage $k$, the defender and the attacker, their cost-to-go functional $f^d_k$ and $f^a_k$, together with the strategies for future stages $(\pi^d_k, \pi^d_{k+1},\cdots,\pi^d_{N-1})$ and $(\pi^a_k, \pi^a_{k+1},\cdots,\pi^a_{N-1})$ constitute a subgame embedded in the original game in the original game. The original game is a subgame of itself when $k=0$.

\begin{definition}
An  equilibrium  is  an  SPNE  if  and  only  if  it  is  a  Nash  equilibrium in  every  subgame  of  the  original game.
\end{definition}
An SPNE is a Nash equilibrium for the entire game since the entire game of also a subgame when $k=0$. In this paper, we focus on studying the SPNE of the game. The complete characterization and the computation of the SPNE are conducted via two steps. The first is to characterize the SPNE control strategies from backward induction for all possible observation decision sequences. The second is to find the SPNE observation strategies based on the values under the SPNE control strategies computed in the first step.

\subsection{Control Strategies}\label{subsec:SPNEControlStrategies}

Suppose that we are given a sequence of observation decisions $I_{N-1}^d$ and a sequence of jamming decisions $I_{N-1}^a$. Under control strategies $\nu^d$ and $\nu^a$, the expected cost-to-go starting from time $k$ conditioning on the information available after the observation is
\begin{equation}\label{Eq:DefinitionPostObservationCost2Go}
\begin{aligned}
V_k^d(\nu^d,\nu^a) = \mathbb{E}&\left[\sum_{n=k}^{N-1}c_n(x_n,u^d_n,u^a_n,i^d_n,i^a_n) + x_N' Q_N x_N \middle\vert \bar{\mathcal{F}}_k  \right].
\end{aligned}
\end{equation}
Let us define the SPNE cost-to-go value as ${V_k^d}^*(\bar{\mathcal{F}}_k) = \min_{v^d} V_k^d(\nu^d,{\nu^a}^*)$, where ${\nu^a} = \argmax V^d_k({\nu^d}^*,\nu^a)$. The complete solution of this problem requires the knowledge of 1) the SPNE control strategies at any stage 2) the SPNE expected cost of proceeding from any state at any time to the end.  The main results for the SPNE control strategies are summarized in the following theorem.

\begin{theorem}
For any given observation sequence $I_{N-1}^d$ and jamming sequence $I_{N-1}^a$, starting from any stage $k=0,1,\cdots,N$, the SPNE cost-to-go value to the end is
\begin{equation}\label{Eq:SPNEControlCost2Go}
\begin{aligned}
{V_k^d}^* =& \mathbb{E}\left[x_k' L_{k} x_k\middle\vert \bar{\mathcal{F}}_k\right] + \sum_{n=k}^{N-1} \Tr \Sigma_s C'L_{n+1}C\\
&+   \sum_{n=k}^{N-1}\left( \Tr P_n(\bar{\mathcal{F}}_n) \varphi_n + i_n^d O_n^d - i_n^a O_n^a\right),
\end{aligned}
\end{equation}
where
{\small
\begin{equation}\label{Eq:RicattiUpdate}
\begin{aligned}
\end{aligned}
L_n = Q_n + A'\left(L_{n+1} -L_{n+1}\begin{bmatrix}
B^d & B^a\\
\end{bmatrix} M_{n}^{-1}\begin{bmatrix}
{B^d}'\\ {B^a}'
\end{bmatrix}L_{n+1} \right)A,
\end{equation}}for $n=1,2\cdots,N-1$ with $L_N = Q_N$. The matrix 
\begin{equation}\label{Eq:SPNEControlAuxiliaryMatrix}
M_n =  \begin{bmatrix}
R_{n}^d + {B^d}'L_{n+1}B^d & {B^d}'L_{n+1}B^a\\
{B^a}' L_{n+1} B^d & {B^a}' L_{n+1} B^a - R^a_{n}
\end{bmatrix}
\end{equation}
is invertible provided that $R_n^a > {B^a}' L_{n+1} B^a$ for $k=0,1,\cdots,N-1$. 

At each stage $n$, the SPNE control strategies of the defender and the attacker take the form of the linear state feedback control laws
\begin{equation}\label{Eq:SPNEControlStrategies}
    \begin{bmatrix}
    {u^d_n}^*\\
    {u^a_n}^*
    \end{bmatrix} = - M^{-1}_{n}\begin{bmatrix}
    {B^d}' \\ {B^a}'
    \end{bmatrix} L_{n+1} A \hat{x}_n,
\end{equation}
where the estimator $\hat{x}_n = \mathbb{E}\left[x_n\middle \vert \bar{\mathcal{F}}_n \right]$ is given by a Kalman-type linear filter \cite{kalman1960new} operating on the observation data $Y_n$ decided by $I_n^d$ and $I_n^a$. The covariance of the estimation error associated with the filter $P_n(\bar{\mathcal{F}}_n) = \mathbb{E}\left[ (x_n-\hat{x}_n)(x_n - \hat{x}_n)'\right]$ can be propagated as
{\begin{equation}\label{ConvariancePropagation}
\begin{aligned}
    &P_n(\bar{\mathcal{F}}_n) =\\
    &\begin{cases}
    A P_{n-1}(\bar{\mathcal{F}}_{n-1})A' + C\Sigma_s C', \quad \quad \quad  \textrm{if }h(i_n^d, i^a_n) = 0,\\
    \left(\Id - G_n D\right) \left( A P_{n-1}(\bar{\mathcal{F}}_{n-1})A' + C \Sigma_s C'\right)\times \\
    \left( \Id - G_n D \right)'+ G_n E \Sigma_oE'G_n',\quad \ \ \  \textrm{if }h(i_n^d, i^a_n) = 1,\\
    \end{cases}
\end{aligned}
\end{equation}}where $G_n$ can be recognized as one of the usual Kalman filter gains with
$$
\begin{aligned}
G_n =& \left( AP_{n-1}(\bar{\mathcal{F}}_{n-1}) A' + C \Sigma_s C' \right)D'\times\\
&\left[ D\left(AP_{n-1}(\bar{\mathcal{F}}_{n-1})A'+ C\Sigma_sC' \right)D' + E\Sigma_o E'\right]^{-1}. 
\end{aligned}
$$
Here, the observation effect coefficients $\varphi_n$ in \Cref{Eq:SPNEControlCost2Go} is given as
\begin{equation}\label{Eq:ObservationEffectCoefficient}
\varphi_n = A' L_{n+1} \begin{bmatrix}
B^d & B^a
\end{bmatrix} M^{-1}_{n} 
\begin{bmatrix}
{B^d}'\\
{B^a}'
\end{bmatrix} L_{n+1}A.
\end{equation}

\end{theorem}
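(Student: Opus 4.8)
My plan is a backward induction on the stage index $k$ from $N$ down to $0$, using subgame perfection so that at stage $k$ I may assume the claimed strategies are played at every later stage. For $k=N$ there is nothing to optimize and \Cref{Eq:SPNEControlCost2Go} reduces to ${V_N^d}^*=\mathbb{E}[x_N'Q_Nx_N\mid\bar{\mathcal{F}}_N]$ with $L_N=Q_N$ and both sums empty. Alongside the main statement I would also carry the auxiliary fact $L_{n}\succeq 0$ (true at $n=N$ since $Q_N\succeq 0$), which is what makes the $(1,1)$ block of $M_n$, namely $R_n^d+{B^d}'L_{n+1}B^d$, positive definite.

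For the inductive step I would write the dynamic programming equation
\[
{V_k^d}^* = \min_{u^d_k}\ \max_{u^a_k}\ \mathbb{E}\!\left[\, c_k(x_k,u^d_k,u^a_k,i^d_k,i^a_k) + {V_{k+1}^d}^*(\bar{\mathcal{F}}_{k+1}) \,\middle|\, \bar{\mathcal{F}}_k \,\right],
\]
substitute the induction hypothesis for ${V_{k+1}^d}^*$, and push $\mathbb{E}[\cdot\mid\bar{\mathcal{F}}_k]$ inside. The two deterministic sums in the hypothesis pass through untouched, because for a prescribed pair of observation/jamming sequences the error covariances $P_n$ depend only on the indicators $h(i^d_j,i^a_j)$ through \Cref{ConvariancePropagation} and hence are non-random; and the term $\mathbb{E}[x_{k+1}'L_{k+1}x_{k+1}\mid\bar{\mathcal{F}}_k]$ is expanded via \Cref{Eq:SystemDynamics}: the noise $w_k$ is independent of $\bar{\mathcal{F}}_k$ and zero-mean, which peels off $\Tr(\Sigma_sC'L_{k+1}C)$, and, since the controls are $\bar{\mathcal{F}}_k$-measurable, the remaining cross terms with $x_k$ reduce to $\hat{x}_k=\mathbb{E}[x_k\mid\bar{\mathcal{F}}_k]$. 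Writing $\mathbf{B}=\begin{bmatrix}B^d & B^a\end{bmatrix}$ and $\mathbf{u}_k=\begin{bmatrix}u^d_k\\u^a_k\end{bmatrix}$, the control-dependent part collapses to the quadratic $\mathbf{u}_k'M_k\mathbf{u}_k+2\mathbf{u}_k'\mathbf{B}'L_{k+1}A\hat{x}_k$ with $M_k$ exactly \Cref{Eq:SPNEControlAuxiliaryMatrix}.

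The crux is the saddle-point analysis of this quadratic. Using $L_{k+1}\succeq 0$ and the hypothesis $R_k^a>{B^a}'L_{k+1}B^a$, the $(1,1)$ block of $M_k$ is positive definite while the Schur complement of that block is $\preceq {B^a}'L_{k+1}B^a-R_k^a\prec 0$, so $M_k$ is invertible and the convex-concave quadratic admits the unique saddle point $\mathbf{u}_k^*=-M_k^{-1}\mathbf{B}'L_{k+1}A\hat{x}_k$, which is \Cref{Eq:SPNEControlStrategies}, with saddle value $-\hat{x}_k'\varphi_k\hat{x}_k$ for $\varphi_k$ as in \Cref{Eq:ObservationEffectCoefficient}. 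Completing the square first in $u^a_k$ (legitimate since $R_k^a\succ{B^a}'L_{k+1}B^a$) and then in $u^d_k$ also shows $L_{k+1}-L_{k+1}\mathbf{B}M_k^{-1}\mathbf{B}'L_{k+1}\succeq 0$, hence $L_k\succeq Q_k\succeq 0$, propagating the auxiliary fact. To recover the stated form I would then use $\hat{x}_k'\varphi_k\hat{x}_k=\mathbb{E}[x_k'\varphi_kx_k\mid\bar{\mathcal{F}}_k]-\Tr(\varphi_kP_k)$ (valid because the conditional error covariance equals the deterministic $P_k$), combine $\mathbb{E}[x_k'(Q_k+A'L_{k+1}A-\varphi_k)x_k\mid\bar{\mathcal{F}}_k]=\mathbb{E}[x_k'L_kx_k\mid\bar{\mathcal{F}}_k]$ via \Cref{Eq:RicattiUpdate}, and absorb the three leftover deterministic pieces $\Tr(\varphi_kP_k)$, $i^d_kO^d_k-i^a_kO^a_k$, and $\Tr(\Sigma_sC'L_{k+1}C)$ into the $n=k$ terms of the two sums in \Cref{Eq:SPNEControlCost2Go}, closing the induction.

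The main obstacle is not the algebra but the estimation bookkeeping: one must justify that $\hat{x}_n$ is exactly the conditional mean delivered by the Kalman-type recursion with the control-independent, non-random covariance propagation \Cref{ConvariancePropagation}, consistently with the feedback laws being linear. This is the LQG separation phenomenon -- the closed loop stays jointly Gaussian precisely because every policy invoked in the induction is linear, so the induction both uses and certifies Gaussianity, after which the error covariance obeys the two-case recursion (prediction-only when $h=0$; Joseph-form measurement update with gain $G_n$ when $h=1$). A secondary point is to track $L_{n+1}\succeq 0$ and the Schur-complement sign at every stage; if $R_n^a>{B^a}'L_{n+1}B^a$ fails, the inner maximization over $u^a_n$ is unbounded and no saddle control pair exists, so this condition is genuinely needed rather than merely convenient.
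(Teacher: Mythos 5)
Your proposal is correct and follows essentially the same route as the paper: backward induction on $k$, a dynamic-programming reduction to a quadratic minimax in the stage controls solved by the linear feedback law \Cref{Eq:SPNEControlStrategies}, invertibility of $M_k$ via the Schur complement under $R_k^a > {B^a}'L_{k+1}B^a$, and the variance-decomposition identity converting the $\hat{x}_k$ quadratic into $\mathbb{E}[x_k'L_kx_k\mid\bar{\mathcal{F}}_k]+\Tr(\varphi_kP_k)$, with the covariance recursion \Cref{ConvariancePropagation} handled as a control-independent Kalman-type update. Your only departures are cosmetic refinements — stacking the controls into one saddle-point quadratic instead of solving the two coupled first-order conditions, and explicitly propagating $L_n\succeq 0$, which the paper uses implicitly.
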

\proof
The proof is conducted by backward induction. When $k=N$, there are no control strategies involved at this stage. Hence, we have
$$
{V_N^d}^*(\bar{\mathcal{F}}_N) = V_N^d(\bar{\mathcal{F}}_N) = \mathbb{E}\left[x_N' Q_Nx_N\middle\vert \bar{\mathcal{F}}_N \right].
$$
which agrees with \Cref{Eq:SPNEControlCost2Go}. We demonstrate that \Cref{,Eq:SPNEControlCost2Go,Eq:RicattiUpdate,Eq:SPNEControlStrategies,Eq:SPNEControlAuxiliaryMatrix,ConvariancePropagation,Eq:ObservationEffectCoefficient} hold when $k=N-1$. By definition, we have 
\begin{equation}\label{Eq:ProofCost2Go-1}
\begin{aligned}
{V_{N-1}^d}^*&=\min_{u_{N-1}^d} \max_{u_{N-1}^a} \mathbb{E}\Big[ x_{N-1}' Q_{N-1}x_{N-1} + {u^{d}_{N-1}}'R_{N-1}^d u^{d}_{N-1}\\
&- {u^{a}_{N-1}}'R_{N-1}^a u^{a}_{N-1} + i_{N-1}^d O_{N-1}^d - i_{N-1}^a O_{N-1}^a\\
&+ x_N' L_N x_N   \Big\vert \bar{\mathcal{F}}_{N-1} \Big].
\end{aligned}
\end{equation}
Substituting $X_N = Ax_{N-1} + B^d u_{N-1}^d + B^au^a_{N-1} +Cw_{N-1}$, carrying out the expectation, minimizing over $u_{N-1}^d$, and maximizing over $u^a_{N-1}$ yield the SPNE control for this stage
{\footnotesize
\begin{equation}\label{Eq:SPNEControlLinearEquations}
\begin{aligned}
{u^d_{N-1}}^* = - (R_{N-1}^d + {B^d}'L_N B^d)^{-1}{B^d}' L_N(A\hat{x}_{N-1}+ B^a {u^a_{N-1}}^*),\\
{u^a_{N-1}}^* = - ({B^a}'L_N B^a -R_{N-1}^a)^{-1}{B^a}' L_N(A\hat{x}_{N-1}+ B^d {u^d_{N-1}}^*).
\end{aligned}
\end{equation}}
Solving the two linear equations yields \Cref{Eq:SPNEControlStrategies} for the case of $n=N-1$. Substituting the SPNE control back into \Cref{Eq:ProofCost2Go-1} gives
\begin{equation}\label{Eq:ProofCost2Go-2}
\begin{aligned}
&{V_{N-1}^d}^*=\\
& \mathbb{E}\Big[ x_{N-1}' \left(Q_{N-1} + A'L_NA \right)x_{N-1}\\
&- \hat{x}_{N-1}' A' L_N \begin{bmatrix}
B^d & B^a
\end{bmatrix}M_{N-1}^{-1} \begin{bmatrix}
{B^d}' \\ {B^a}'
\end{bmatrix}L_N A \hat{x}_{N-1}\\
& + i_{N-1}^d O_{N-1}^d - i_{N-1}^a O_{N-1}^a \Big\vert \bar{\mathcal{F}}_{N-1} \Big] + \Tr\Sigma_s C' L_N C,
\end{aligned}
\end{equation}
where we have used the fact that $\mathbb{E}\left[w_{N-1}' C' L_N C w_{N-1}\right] = \Tr \Sigma_s C' L_{N-1}C$. The expectation in \Cref{Eq:ProofCost2Go-2} must still be specified for the quadratic term involving $\hat{x}_{n-1}$. For any random vector $x$ and any appropriately dimensioned matrix $M$, we have the relation
$$
\mathbb{E}\left[x' M x\right] = \bar{x}' M \bar{x} + \mathbb{E}\left[(x-\bar{x})' M(x-\bar{x})\right],
$$
where $\bar{x} = \mathbb{E}\left[ x\right]$. Applying this relation to the quadratic term involving $\hat{x}_{n-1}$ \Cref{Eq:ProofCost2Go-2} yields
$$
\begin{aligned}
&{V_{N-1}^d}^*=\\
& \mathbb{E}\Big[ x_{N-1}' L_{N-1} x_{N-1} + (x_{N-1}-\hat{x}_{N-1})' A' L_N \begin{bmatrix}
B^d & B^a
\end{bmatrix}\times\\
&M_{N-1}^{-1} \begin{bmatrix}
{B^d}' \\ {B^a}'
\end{bmatrix}L_N A (x_{N-1}-\hat{x}_{N-1}) \Big\vert \bar{\mathcal{F}}_{N-1} \Big] + i_{N-1}^d O_{N-1}^d\\
&  - i_{N-1}^a O_{N-1}^a  + \Tr\Sigma_s C' L_N C.
\end{aligned}
$$
Using the definition of $P_{N-1}(\bar{\mathcal{F}}_{N-1})$ gives
$$
\begin{aligned}
&{V_{N-1}^d}^*= \mathbb{E}\Big[ x_{N-1}' L_{N-1} x_{N-1}\Big\vert \bar{\mathcal{F}}_{N-1} \Big]  + i_{N-1}^d O_{N-1}^d\\
&  - i_{N-1}^a O_{N-1}^a  + \Tr\Sigma_s C' L_N C + \Tr P_{N-1}(\bar{\mathcal{F}}_{N-1})\varphi_{N-1},
\end{aligned}
$$
which agrees with \Cref{Eq:SPNEControlCost2Go}, and $\varphi_{N-1}$ satisfies \Cref{Eq:ObservationEffectCoefficient}. The propagation of $P_{n}(\bar{\mathcal{F}}_n)$ in \Cref{ConvariancePropagation} follows the results of \cite{cooper1971optimal}. Thus, we have shown that \Cref{,Eq:SPNEControlCost2Go,Eq:RicattiUpdate,Eq:SPNEControlStrategies,Eq:SPNEControlAuxiliaryMatrix,ConvariancePropagation,Eq:ObservationEffectCoefficient} hold for $k=N-1$. Suppose that the claims \Cref{,Eq:SPNEControlCost2Go,Eq:RicattiUpdate,Eq:SPNEControlStrategies,Eq:SPNEControlAuxiliaryMatrix,ConvariancePropagation,Eq:ObservationEffectCoefficient} hold for an arbitrary $k+1\leq N$. 
By definition of $V_k^d$ and the tower property of conditional expectation \cite{jacod2012probability}, we have 
$$
\begin{aligned}
V_k^d(\nu^d,\nu^a)  = \mathbb{E}&\left[( x_k' Q_k x_k + {u_k^d}'R_k^d u_k^d - {u^a_k}' R_k^a u_k^a   \right.\\
&+ i^d_k O^d_k + i^a_k O^a_k ) + V_{k+1}^d(\bar{\mathcal{F}}_{k+1}) \bigg\vert \bar{\mathcal{F}}_k  \bigg].
\end{aligned}
$$
An application of dynamic programming techniques yields
$$
\begin{aligned}
&{V_k^d}^*(\bar{\mathcal{F}}_k)\\
=& \min_{u_k^d} \max_{u_k^a} \mathbb{E}\left[ x_k' Q_k x_k + {u_k^d}'R_k^d u_k^d - {u^a_k}' R_k^a u_k^a  + i^d_k O^d_k\right.\\
&+ i^a_k O^a_k  + {V_{k+1}^d}^*(\bar{\mathcal{F}}_{k+1}) \bigg\vert \bar{\mathcal{F}}_k  \bigg]\\
=&\sum_{n=k+1}^{N-1} \left[\Tr \left(P_n(\bar{\mathcal{F}}_n)\varphi_n+\Sigma_s C' L_{n+1} C\right)+i_n^d O_n^d- i_n^a O_n^a\right]\\
&+\min_{u_k^d}\max_{u_k^a}\mathbb{E}\Bigg[ x_k' Q_k x_k + {u_k^d}'R_k^d u_k^d - {u^a_k}' R_k^a u_k^a + i^d_k O^d_k\\
&+ i^a_k O^a_k + x_{k+1}' L_{k+1} x_{k+1}
\Big\vert \bar{\mathcal{F}}_k\Big].
\end{aligned}
$$
The remaining proof, which deals the minimax term, is identical to the proof for the case when $k =N-1$. Now it remains to show that $M_n$ is invertible when $R_n^a > {B^a}' L_{n+1}{B^a}$. The matrix $M_n$ is invertible, provided that the Schur complement \cite{zhang2006schur}, 
$$
\begin{aligned}
S_C \equiv& {B^a}' L_{n+1} B^a - R^a_{n} - {B^a}' L_{n+1} B^d\times\\
&(R_{n}^d + {B^d}'L_{n+1}B^d)^{-1}{B^d}'L_{n+1}B^a,
\end{aligned}
$$
is invertible. Clearly, the Schur complement $S_C$ is a real, symmetric matrix. And $R_n^a > {B^a}' L_{n+1}{B^a}$ guarantees that the Schur complement $S_C$ is negative definite, hence invertible.
\endproof

\begin{remark}
It is required to calculate the matrix inverse $M_n^{-1}$. The Schur complement of the bottom-right-corner block in the $M_n$ matrix is the real, symmetric matrix
$$
\begin{aligned}
S_B(L_{n+1}) &= R_{n}^d + {B^d}'L_{n+1}B^d + {B^d}'L_{n+1}B^a \times \\
&(R^a_{n} - {B^a}' L_{n+1} B^a)^{-1} {B^a}'L_{n+1}B^d,
\end{aligned}
$$
which is positive definite since $R^a_{n} > {B^a}' L_{n+1} B^a$, and hence invertible. Therefore, the matrix $M_n^{-1}$ can be factored as follows:
\begin{equation}\label{Eq:MDecomposition}
M_n^{-1}  = \Omega T\Omega'
\end{equation}
with
\begin{equation}\label{Eq:MDecompositionComponents}
\begin{aligned}
\Omega' &= \begin{bmatrix}
\Id & {B^d}' L_{n+1} B^a \left[R^a_n - {B^a}'L_{n+1}B^a\right]^{-1}\\
0 & \Id
\end{bmatrix},\\
T & =\begin{bmatrix}
S_B^{-1}(L_{n+1}) & 0\\
0 & -\left[R^a_n - {B^a}' L_{n+1} B^a\right]^{-1}
\end{bmatrix}.
\end{aligned}
\end{equation}
This allows the defender and the attacker to compute their SPNE control laws using explicit formulae,
$$
{\small
\begin{aligned}
{u^d_n}^* =& \nu_n^d(\bar{\mathcal{F}}_n) = - S_B^{-1}(L_{n+1}){B^d}'\times\\ &\Big\{\Id + L_{n+1}B^a \big[R_n^a - {B^a}'L_{n+1} B^a\big]^{-1}{B^a}'\Big\} L_{n+1} A \hat{x}_n\\
{u^a_n}^* =& \nu_a^d(\bar{\mathcal{F}}_n) = [R_n^a - {B^a}'L_{n+1} B^a]^{-1} {B^a}'\times\\
&\bigg( \Id - L_{n+1} B^d S^{-1}_B(L_{n+1}){B^d}' \Big\{\Id + L_{n+1} B^a \times\\
&\left[R_n^a - {B^a}'L_{n+1} B^a\right]^{-1} {B^a}'\Big\} \bigg)L_{n+1}A\hat{x}_n.
\end{aligned}
}
$$
\end{remark}

\begin{remark}
The assumptions that $R^a_n > {B^a}'L_{n+1}B^a$ is not stringent in the setting of adversarial attacks since the cost of injecting malicious controls into the plant is usually expensive, much higher than the normal controls implemented by the defender.
\end{remark}

\begin{remark}
If it is assumed that the initial state $x_0$ is known to both players, the following initial conditions must hold for $P_0(\bar{\mathcal{F}}_0)$ in \Cref{ConvariancePropagation}: $i_0^d = i_0^a =0$ and $P_0(\bar{\mathcal{F}}_0) =0$. This is because the exact knowledge of the initial state $x_0$ at no cost is preferable for the defender to a noisy and expensive initial observation. Hence, this is no incentive to observe for the defender, and as a result, there is no need to jam. If $x_0$ is Gaussian distributed with mean $\bar{x}_0$ and variance $\Sigma_0$, whose realization is unknown to both players but the statistics are known to both, then the following initial conditions hold in \Cref{ConvariancePropagation}:
{
$$
\begin{aligned}
&P_0(\bar{\mathcal{F}}_0) =\\
&\begin{cases}
\Sigma_0,\ \ &\textrm{if }h(i^d_n,i^a_n) =0,\\
\Sigma_0 - \Sigma_0 D'[D'\Sigma_0 D + E'\Sigma_o E]^{-1}D'\Sigma_0,\ \ &\textrm{if }h(i^d_n,i^a_n) =1.
\end{cases}\\
\end{aligned}
$$}
\end{remark}

Given $I_{N-1}^d$ and $I_{N-1}^a$, the total expected cost conditioned on $\bar{\mathcal{F}_0}$ for the defender with the SPNE control strategies is ${V_0^d}^*$. The cost includes the quadratic term $\mathbb{E}\left[x_0'L_0x_0\middle\vert \bar{\mathcal{F}}_0\right]$, the accumulated cost induced by system noise $\sum_{n=0}^{N-1} \Tr \Sigma_s C'L_{n+1}C$, the accumulated cost induced by the estimation error $\sum_{n=0}^{N-1} \Tr P_n(\bar{\mathcal{F}}_n) \varphi_n$, where the estimation error relies heavily on the observation and the jamming decisions, as well as the accumulated costs of observing $\sum_{n=0}^{N-1} i_n^d O_n^d$ and that of jamming $\sum_{n=0}^{N-1} i_n^a O_n^a$.

Being aware of each other's strategies as shown by \Cref{Eq:SPNEControlLinearEquations}, the attacker and the defender deploy their SPNE control strategies according to linear control laws based on their estimate of the state. The jamming of the defender's observation can undermine the control performance, but at the same time, it also impairs the attacking performance at the physical layer since the attacker also suffers from less information, let alone the jamming cost $O^a_n$. Whether the defender should observe or not also depends on multiple factors including the cost of observation $O^d_n$, the control performance degradation caused by the estimation error $\Tr P_n(\bar{\mathcal{F}}_n) \varphi_n$, and the implicit cost of offering more information to the attacker. We will shed more ligth on the observation and the jamming strategies in \Cref{Subsec:ObservationStrategies}.

\subsection{The Observation and the Jamming Strategies}\label{Subsec:ObservationStrategies}

In \Cref{subsec:SPNEControlStrategies}, we have demonstrated that for any given observation sequences, the expected cost-to-go of the game under the SPNE control strategies after the observation and the jamming decision have been taken at time $k$ is ${V_k^d}^*$. In this section, we derive the the procedures for finding the SPNE observation and jamming strategies by leveraging the results we have developed in \Cref{subsec:SPNEControlStrategies}.

Note that instead of the inner cost-to-go $V_k^d(\bar{\mathcal{F}}_k)$, the cost-to-go before the observation and the jamming decisions are made at stage $k$ is $f_k^d(\mathcal{F}_k)$ defined in \Cref{Eq:ExpectedCosttoGo}.  The strategies to be made are the observation strategy $\mu^d_n$, the jamming strategy $\mu^a_n$, and the control strategies $\nu_n^d$ and $\nu_n^a$ for all $n\geq k$. By the definition of $V_k^d(\bar{\mathcal{F}}_k)$ in \Cref{Eq:DefinitionPostObservationCost2Go}, the definition of $f_k^d$ in \Cref{Eq:ExpectedCosttoGo} and the fact that $\mathcal{F}_k \subset \bar{\mathcal{F}}_k$, we have
$
f_k^d(\mathcal{F}_k) = \mathbb{E}\left[ V_k^d(\bar{\mathcal{F}}_k)\middle\vert \mathcal{F}_k\right].
$
The defender aims to find both the observation strategy and the control strategy that minimize $f_k^d(\mathcal{F}_k)$ at every stage $k$ given $\mathcal{F}_k$ while the attacker aims to do the opposite. Since the control at time stage $n\geq k$ dose not alter the information $\mathcal{F}_k$, with a slight abuse of notation, we can write
$$
{f_k^d}^* (\mathcal{F}_k) = \min_{\pi^d} \max_{\pi^a} f_k^d(\mathcal{F}_k)= \min_{\mu^d}\max_{\mu^a} \mathbb{E}\left[{V_k^d}^* \middle\vert \mathcal{F}_k\right] 
$$

Using \Cref{ConvariancePropagation}, \Cref{Eq:SPNEControlCost2Go} can be written 
$$
\begin{aligned}
{V_k^d}^* =
&\mathbb{E}\left[x_k' L_{k} x_k\middle\vert \bar{\mathcal{F}}_k\right] + \sum_{n=k}^{N-1} \Tr \Sigma_s C'L_{n+1}C\\
&+\sum_{n=k}^{N-1} \Tr \big[ Z\left(P_{n-1}(\bar{\mathcal{F}}_{n-1})\right) - h(i^d_n, i_n^a)\times\\
&\quad H(P_{n-1}\left(\bar{\mathcal{F}}_{n-1})\right)\big]\varphi_n + i_n^d O_n^d -i_n^a O_n^a,
\end{aligned}
$$
for $k\geq 1$, where
$$
\begin{aligned}
Z(P) &= APA' + C\Sigma_s C',\\
H(P) &= Z(P)D'[D(Z(P))D'+ E\Sigma_o E']^{-1}DZ(P).
\end{aligned}
$$
We define $F_k^d(\mathcal{F}_k) = \mathbb{E} \left[ {V_k^d}^*|\mathcal{F}_k \right]$. Using the tower property yields
$$
F_k^d(\mathcal{F}_k) = K_k^d(\mathcal{F}_k) + J_k^d(\mu^d,\mu^a,\mathcal{F}_k),
$$
where 
$$
\begin{aligned}
K_k^d(\mathcal{F}_k) &= \mathbb{E}\left[ x_k'L_k x_k \middle\vert \mathcal{F}_k \right] +  \sum_{n=k}^{N-1} \Tr\Sigma_s C' L_{n+1}C,\\
J_k^d(\mu^d,\mu^a,\mathcal{F}_k) &= \sum_{n=k}^{N-1} \Tr \big[ Z\left(P_{n-1}(\bar{\mathcal{F}}_{n-1})\right) - h(i^d_n, i_n^a)\times\\
&\quad H(P_{n-1}\left(\bar{\mathcal{F}}_{n-1})\right)\big]\varphi_n + i_n^d O_n^d -i_n^a O_n^a.
\end{aligned}
$$
Therefore, to find the SPNE observation strategies, one can only focus on $J_k^d(\mu^d,\mu^a,\mathcal{F}_k)$, which we write as $J_k^d(\mathcal{F}_k)$ for convenience in later discussion. Let us define the quantity ${J_k^d}^*(\mathcal{F}_k) = \min_{\mu^d} \max_{\mu^a} J_k^d(\mathcal{F}_k)$.

If the SPNE observation and the jamming strategies have been given for every possible $\mathcal{F}_{k+1}$, then the rule for selecting the Nash equilibrium at stage $k$,
{$$
\begin{aligned}
{J_k^d}^*(\mathcal{F}_k) =& \min_{i_k^d} \max_{i_k^a} \Tr \big[ Z\left(P_{k-1}(\bar{\mathcal{F}}_{k-1})\right) - h(i^d_k, i_k^a)\times\\ &H(P_{k-1}\left(\bar{\mathcal{F}}_{k-1})\right)\big]\varphi_k + i_k^d O_k^d -i_k^a O_k^a\\
&+ {J^d_{k+1}}^*(\mathcal{F}_{k+1}).\,
\end{aligned}
$$}
for $k=1,2,\cdots,N-1$ and we have ${J_N^d}^* = 0$ by definition.

\begin{Proposition}\label{Prop:EquilibriumMayNotExist}
Suppose that there is a sequence of SPNE observation and jamming strategies $\{({\mu^d_n}^*,{\mu_n^a}^*) ,n=k+1,\cdots,N-1 \}$. Then ${J_{k+1}^d}^*$ can be expressed as a function $P_k(\bar{\mathcal{F}}_k)$. There exists no Nash equilibrium at stage $k$ if 
\begin{equation}\label{Eq:ConditionsNoEquilibrium}
\begin{aligned}
O_{k}^a \leq & \Tr H(P_{k-1})\varphi_k + {J_{k+1}^d}^*(Z(P_{k-1}))\\ &-{J_{k+1}^d}^*(Z(P_{k-1})- H(P_{k-1})),\\
O_k^d \leq & \Tr H(P_{k-1})\varphi_k + {J_{k+1}^d}^*(Z(P_{k-1}))\\ &-{J_{k+1}^d}^*(Z(P_{k-1})- H(P_{k-1})),\\
\end{aligned}
\end{equation}
where $P_{k-1} \coloneqq P_{k-1}(\bar{\mathcal{F}}_{k-1})$ for simplicity.
\end{Proposition}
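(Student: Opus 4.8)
The plan is to freeze the SPNE continuation strategies for stages $k+1,\dots,N-1$, reduce the stage-$k$ observation/jamming interaction to a $2\times2$ zero-sum matrix game in the binary decisions $(i_k^d,i_k^a)$, and then show that the best-response correspondence of that matrix game has no fixed point.

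First I would prove, by downward induction on the stage index, the auxiliary claim that ${J_{k+1}^d}^*$ depends on $\bar{\mathcal{F}}_{k+1}$ only through the estimation-error covariance $P_k(\bar{\mathcal{F}}_k)$. The base case ${J_N^d}^*=0$ is immediate; in the inductive step the recursion displayed just before the proposition shows that ${J_k^d}^*(\mathcal{F}_k)$ depends on $\mathcal{F}_k$ only through $P_{k-1}(\bar{\mathcal{F}}_{k-1})$, because the covariance propagation in \Cref{ConvariancePropagation} is a \emph{deterministic} function of the observation/jamming history alone (it never uses the realized data $Y_n$), and $P_k(\bar{\mathcal{F}}_k)$ equals $Z\big(P_{k-1}(\bar{\mathcal{F}}_{k-1})\big)$ when $h(i_k^d,i_k^a)=i_k^d(1-i_k^a)=0$ and $Z\big(P_{k-1}(\bar{\mathcal{F}}_{k-1})\big)-H\big(P_{k-1}(\bar{\mathcal{F}}_{k-1})\big)$ when it equals $1$. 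Writing $P_{k-1}:=P_{k-1}(\bar{\mathcal{F}}_{k-1})$, this means that, with the continuation frozen, the stage-$k$ interaction is exactly the finite game in which the defender chooses $i_k^d\in\{0,1\}$ to minimize, and the attacker chooses $i_k^a\in\{0,1\}$ to maximize, the bracketed expression in that recursion.

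Next I would evaluate that expression on the four pure profiles, subtract the common constant $\Tr Z(P_{k-1})\varphi_k+{J_{k+1}^d}^*(Z(P_{k-1}))$, and abbreviate the right-hand side of \Cref{Eq:ConditionsNoEquilibrium} as $\Delta_k:=\Tr H(P_{k-1})\varphi_k+{J_{k+1}^d}^*(Z(P_{k-1}))-{J_{k+1}^d}^*(Z(P_{k-1})-H(P_{k-1}))$. Using that $i_k^d=0$ forces $h=0$, that $(i_k^d,i_k^a)=(1,0)$ gives $h=1$, and that $(1,1)$ gives $h=0$, the reduced payoff matrix (row index $i_k^d$, column index $i_k^a$) is
\begin{equation*}
\begin{bmatrix}0 & -O_k^a\\ O_k^d-\Delta_k & O_k^d-O_k^a\end{bmatrix}.
\end{equation*}
Since the admissible policies $i_n^d=\mu_n^d(\mathcal{F}_n)$ and $i_n^a=\mu_n^a(\mathcal{F}_n)$ are deterministic, a stage-$k$ Nash equilibrium must be one of these four profiles, realized as a mutual (pure) best response.

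Then I would rule out each of the four profiles by producing a strictly profitable unilateral deviation, tracing the $4$-cycle $(0,0)\to(1,0)\to(1,1)\to(0,1)\to(0,0)$: against $i_k^a=0$ the defender strictly improves by switching to $i_k^d=1$ whenever $O_k^d<\Delta_k$; against $i_k^d=1$ the attacker strictly improves by switching to $i_k^a=1$ whenever $O_k^a<\Delta_k$ (this is exactly $O_k^d-O_k^a>O_k^d-\Delta_k$); against $i_k^a=1$ the defender strictly improves by dropping the now-jammed, information-free observation whenever $O_k^d>0$; and against $i_k^d=0$ the attacker strictly improves by dropping the now-pointless jamming whenever $O_k^a>0$. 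Hence no profile is a mutual best response and no stage-$k$ equilibrium exists. I expect the two delicate points to be: (i) making the reduction rigorous, i.e. showing carefully that ${J_{k+1}^d}^*$ really collapses to a function of $P_k$ alone — this is where the deterministic character of \Cref{ConvariancePropagation} and the certainty-equivalence/separation structure of \Cref{subsec:SPNEControlStrategies} are essential; and (ii) the strictness of the hypotheses, since the $4$-cycle actually needs $0<O_k^d<\Delta_k$ and $0<O_k^a<\Delta_k$: at the boundary values $O_k^d\in\{0,\Delta_k\}$ or $O_k^a\in\{0,\Delta_k\}$ one of the deviation incentives degrades to a tie and a degenerate equilibrium in weakly dominated strategies can reappear, so \Cref{Eq:ConditionsNoEquilibrium} is best read with strict inequalities, equivalently under the standing assumption that observing and jamming are genuinely costly.
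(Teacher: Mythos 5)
Your proof is correct and follows essentially the same route as the paper: freeze the SPNE continuation, use the data-independence of the covariance recursion to collapse ${J_{k+1}^d}^*$ into a function of $P_k$ by backward induction, and reduce the stage-$k$ interaction to the same $2\times2$ matrix game (yours is the paper's stage-$k$ payoff table with the common constant $\Tr Z(P_{k-1})\varphi_k + {J_{k+1}^d}^*(Z(P_{k-1}))$ subtracted), then rule out all four pure profiles. Your closing caveat that the deviation cycle really needs strict inequalities and strictly positive $O_k^d,O_k^a$ is precisely the boundary issue the paper handles by its stated convention that players do not act when the margin is zero, so it is an accurate reading of the statement rather than a gap.
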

\begin{proof}
At stage $k=N-1$, we have 
{\small$$
\begin{aligned}
{J^d_{N-1}}^* (\mathcal{F}_{N-1}) =&\min_{i_{N-1}^d}\max_{i_{N-1}^a} \Tr \big[ Z\left(P_{N-2}\right) - h(i^d_{N-1}, i_{N-1}^a)\times \\ &H(P_{N-2})\big] \varphi_{N-1} + i_{N-1}^d O_{N-1}^d -i_{N-1}^a O_{N-1}^a.
\end{aligned}
$$}
Since the observation decision $i^d_{N-1}$ and the jamming decision $i^a_{N-1}$ are both binary, the game at stage $N-1$ can be written in the form of a matrix game, which is given in \Cref{tab:PayoffMatrixStageN-1}. An simple analysis of the matrix game shows that if
$
O_{N-1}^d \geq \Tr H(P_{N-2})\varphi_{N-1},
$
the Nash equilibrium of the matrix game is ${i_{N-1}^d}^*= 0$, ${i_{N-1}^a}^*= 0$; if 
$$
\begin{aligned}
O_{N-1}^d &< \Tr H(P_{N-2})\varphi_{N-1} \leq 
O_{N-1}^a ,
\end{aligned}
$$
there exists a Nash equilibrium of the matrix game; i.e., ${i_{N-1}^d}^*= 1$, ${i_{N-1}^a}^*= 0$;
there does not exist a Nash equilibrium if 
$$
\begin{aligned}
O_{N-1}^d &< \Tr H(P_{N-2})\varphi_{N-1},\ \ \ \textrm{and}\\
O_{N-1}^a &< \Tr H(P_{N-2})\varphi_{N-1},
\end{aligned}
$$
which aligns with \Cref{Eq:ConditionsNoEquilibrium}. Note that when the Nash equilibrium does exist, $J_{N-1}^*(\mathcal{F}_{N-1})$ depends only on $P_{N-2}$ except other system parameters and cost coefficients that are common knowledge to both players. Suppose that the claims in Proposition \Cref{Prop:EquilibriumMayNotExist} hold for $k+1$. For stage $k$, we have
\begin{equation}\label{Eq:DynamicProgrammingObservation}
\begin{aligned}
{J^d_{k}}^* (\mathcal{F}_{k}) =&\min_{i_{k}^d}\max_{i_{k}^a} \Tr \big[ Z\left(P_{k-1}\right) - h(i^d_{k}, i_{k}^a) H(P_{k-1})\big] \varphi_{k}\\
& + i_{N-1}^d O_{N-1}^d -i_{N-1}^a O_{N-1}^a + {J_{k+1}^d}^*(P_k)\\
=& \min_{i_{k}^d}\max_{i_{k}^a} \Tr \big[ Z\left(P_{k-1}\right) - h(i^d_{k}, i_{k}^a) H(P_{k-1})\big] \varphi_{k}\\
& + i_{N-1}^d O_{N-1}^d -i_{N-1}^a O_{N-1}^a\\
 &+ {J_{k+1}^d}^*\left(Z\left(P_{k-1}\right) - h(i^d_{k}, i_{k}^a) H(P_{k-1})\right).\\
\end{aligned}
\end{equation}
The remaining proof follows the same procedures in the case of $k=N-1$.
{	
\begin{table}[H]
\caption{The payoff matrix for the zero-sum game between the defender and the attacker at stage $N-1$.}
\label{tab:PayoffMatrixStageN-1}
\begin{tabular}{|c|c|c|c|}
\hline
\multicolumn{2}{|c|}{\multirow{2}{*}{$J_{N-1}^d$}} & \multicolumn{2}{c|}{$i^a_{N-1}$}                                                                                          \\ \cline{3-4} 
\multicolumn{2}{|c|}{}                           & 1                                                                                     & 0                             \\ \hline
\multirow{2}{*}{$i^d_{N-1}$} &
  1 &
  \begin{tabular}[c]{@{}c@{}}$\Tr Z(P_{N-2})\varphi_{N-1}$\\ $+O_{N-1}^d - O_{N-1}^a$\end{tabular} &
  \begin{tabular}[c]{@{}c@{}}$\Tr [Z(P_{N-2}) - H(P_{N-2})]\times$\\$\varphi_{N-1}$\\ $+O_{N-1}^d$\end{tabular} \\ \cline{2-4} 
                       & 0                       & \begin{tabular}[c]{@{}c@{}}$\Tr Z(P_{N-2})\varphi_{N-1}$\\ $- O_{N-1}^a$\end{tabular} & $\Tr Z(P_{N-2})\varphi_{N-1}$ \\ \hline
\end{tabular}
\end{table} 
}
\end{proof}

We assume in the proof that when the margin is zero, there is no incentive for both players to act. The defender and the attacker have to decide at each stage whether to observe and to attack respectively yet simultaneously. When the conditions in \Cref{Eq:ConditionsNoEquilibrium} hold, there is an incentive to observe if the attacker does not jam but there always an incentive for the attacker to jam if the defender observes. Hence, there exists no Nash equilibrium in pure strategy. Now, suppose that the defender announces its observation decision first, then the attacker chooses whether to jam. That is to say at stage $n$ the observation and the jamming strategies can be written as $\mu^d_n(\mathcal{F}_{n}^d)$ and $\mu^a_n(\mathcal{F}_{n}^a)$, where
$$
\begin{aligned}
\mathcal{F}_{n}^d = \mathcal{F}_n,\ \ \mathcal{F}_{n}^a = \mathcal{F}_n \cup I_{n}^d.
\end{aligned}
$$

\begin{theorem}\label{Theorem:ObservationDynamicProgramming}
Under the information structure $\mathcal{F}_n^d$ and $\mathcal{F}_n^a$, for any stage $k\geq 1$, there always exist a pair of Subgame Perfect Equilibrium (SPE) strategies that depend only on $P_{k-1}$. The equilibrium at stage $k$ is $({i_k^d}^* , {i_k^a}^*)=(1,1)$ if
\begin{equation}\label{Eq:ThresholdObservationPolicy}
\begin{aligned}
O_k^d <& O_k^a < \Tr H(P_{k-1})\varphi_k + {J_{k+1}^d}^*(Z(P_{k-1})) \\
&-{J_{k+1}^d}^*(Z(P_{k-1})-H(P_{k-1}));\\
\end{aligned}
\end{equation}
The equilibrium is $({i_k^d}^* , {i_k^a}^*)=(1,0)$ if
\begin{equation}\label{Eq:ThresholdObservationPolicy1}
\begin{aligned}
O_k^d <& \Tr H(P_{k-1})\varphi_k + {J_{k+1}^d}^*(Z(P_{k-1})) \\
&-{J_{k+1}^d}^*(Z(P_{k-1})-H(P_{k-1}))\leq O_k^a;\\
\end{aligned}
\end{equation}
and $({i_k^d}^* , {i_k^a}^*)=(0,0)$ otherwise. Hence, ${J_{k}^d}^*$ also depends solely on $P_{k-1}$.
\end{theorem}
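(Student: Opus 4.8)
The plan is to run backward induction on $k$, in the spirit of the proof of \Cref{Prop:EquilibriumMayNotExist}, but now solving each stage game in its \emph{sequential} form: the defender picks $i_k^d$ (a function of $\mathcal{F}_k^d=\mathcal{F}_k$), then the attacker observes $i_k^d$ and picks $i_k^a$ (a function of $\mathcal{F}_k^a=\mathcal{F}_k\cup I_k^d$). Since this is a finite game of perfect information, backward induction within the stage always produces an SPE, so the existence claim is automatic once the stage payoffs are identified; the work is to show the resulting decisions and value have the stated threshold form and depend on $P_{k-1}:=P_{k-1}(\bar{\mathcal F}_{k-1})$ only. Writing $h(i_k^d,i_k^a)=i_k^d(1-i_k^a)$ and using \Cref{ConvariancePropagation} to identify $P_k(\bar{\mathcal F}_k)$ with $Z(P_{k-1})$ when $h=0$ and with $Z(P_{k-1})-H(P_{k-1})$ when $h=1$, the bracketed term in the recursion \Cref{Eq:DynamicProgrammingObservation} evaluates, for the four decision pairs, to $a_k$ at $(0,0)$, $a_k-O_k^a$ at $(0,1)$, $a_k-\Delta_k+O_k^d$ at $(1,0)$, and $a_k+O_k^d-O_k^a$ at $(1,1)$, where $a_k:=\Tr Z(P_{k-1})\varphi_k+{J_{k+1}^d}^*(Z(P_{k-1}))$ and $\Delta_k$ denotes the right-hand side of \Cref{Eq:ThresholdObservationPolicy}. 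The one piece of nontrivial algebra is the identity $\Tr[Z(P_{k-1})-H(P_{k-1})]\varphi_k+{J_{k+1}^d}^*(Z(P_{k-1})-H(P_{k-1}))=a_k-\Delta_k$, which is immediate from the definition of $\Delta_k$. By the induction hypothesis that ${J_{k+1}^d}^*$ is a function of $P_k$, all four payoffs are functions of $P_{k-1}$ alone.

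Next I would solve the stage game. Given $i_k^d=0$, the attacker chooses the larger of $a_k$ and $a_k-O_k^a$; since $O_k^a\ge 0$ and ties are broken toward inaction, the continuation value is $a_k$ with ${i_k^a}^*=0$. Given $i_k^d=1$, the attacker compares $(1,1)$, worth $a_k+O_k^d-O_k^a$, with $(1,0)$, worth $a_k-\Delta_k+O_k^d$; their difference is $\Delta_k-O_k^a$, so the attacker jams exactly when $O_k^a<\Delta_k$, and in every case the continuation value of announcing $i_k^d=1$ equals $a_k+O_k^d-\min(O_k^a,\Delta_k)$ (the two branches coincide at $O_k^a=\Delta_k$). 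The defender then minimizes over $i_k^d$: observing is weakly better iff $O_k^d\le\min(O_k^a,\Delta_k)$, and with ties broken toward $i_k^d=0$ it is strictly chosen iff $O_k^d<O_k^a$ and $O_k^d<\Delta_k$. Splitting on the sign of $O_k^a-\Delta_k$ then gives exactly the three regimes: $(1,1)$ precisely when $O_k^d<O_k^a<\Delta_k$, $(1,0)$ precisely when $O_k^d<\Delta_k\le O_k^a$, and $(0,0)$ on the complement $\{O_k^d\ge O_k^a\}\cup\{O_k^d\ge\Delta_k\}$, matching \Cref{Eq:ThresholdObservationPolicy,Eq:ThresholdObservationPolicy1} and the ``otherwise'' clause. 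In every regime ${J_k^d}^*$ equals one of $a_k$, $a_k+O_k^d-O_k^a$, $a_k+O_k^d-\Delta_k$, hence is a well-defined function of $P_{k-1}$, closing the induction. The base case $k=N-1$ is the same computation with ${J_N^d}^*\equiv 0$, so $a_{N-1}=\Tr Z(P_{N-2})\varphi_{N-1}$ and $\Delta_{N-1}=\Tr H(P_{N-2})\varphi_{N-1}$, recovering the sequential solution of the matrix game in \Cref{tab:PayoffMatrixStageN-1}.

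I do not anticipate a real obstacle: the proof is elementary backward induction on a depth-two tree once the payoff bookkeeping is done. The two points that need care are (i) the reduction itself, namely that $\mathbb{E}[{V_k^d}^*\mid\mathcal F_k]=K_k^d(\mathcal F_k)+J_k^d$ with $K_k^d$ independent of the observation and jamming strategies (established just before \Cref{Prop:EquilibriumMayNotExist}), so that optimizing $f_k^d$ is equivalent to optimizing $J_k^d$; and (ii) a consistent tie-breaking convention at the boundaries $O_k^d=O_k^a$, $O_k^d=\Delta_k$, $O_k^a=\Delta_k$, so that the three stated regions partition the parameter space and both the equilibrium decisions and ${J_k^d}^*$ are single-valued functions of $P_{k-1}$ at every stage.
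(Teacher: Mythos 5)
Your proposal is correct and follows essentially the same route as the paper: backward induction on $k$, solving each stage as a Stackelberg game in which the attacker best-responds to the announced observation decision (not jamming when $i_k^d=0$, jamming iff $O_k^a<\Delta_k$ when $i_k^d=1$), and the defender then minimizes anticipating this response, with the payoffs from \Cref{tab:MatrixGameStagek} and dependence on $P_{k-1}$ carried through the induction hypothesis. Your bookkeeping with $a_k$ and $\Delta_k$, and the explicit handling of ties, merely streamline the same case analysis the paper performs.
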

\begin{proof}
It is easy to see the hypothesis holds at stage $N-1$. Suppose that the hypothesis is true for stage $k+1$. Following the same argument in the proof of \Cref{Prop:EquilibriumMayNotExist}, we can arrive at the same matrix described by \Cref{tab:MatrixGameStagek} except that now the defender announces its observation decision first, then the defender reacts. In this circumstance, we have a Stackelberg game and a Stackelberg equilibrium. When the defender chooses not to observe, the best response of the attacker is not to jam since jamming generates no benefit but additional cost of jamming. This scenario gives $(i^d_k,i_k^a) = (0,0)$ and cost-to-go $J_k^d = \Tr Z(P_{k-1})\varphi_{k}{+J_{k+1}^d}^*(Z(P_{k-1}))$. When the defender chooses to observe, the best response of the attacker is to jam if $O_k^a  < \Tr H(P_{k-1})\varphi_k + {J_{k+1}^d}^*(Z(P_{k-1})) -{J_{k+1}^d}^*(Z(P_{k-1})-H(P_{k-1}))$, which gives cost-to-go $J_k^d = \Tr Z(P_{k-1})\varphi_{k} +O_{k}^d - O_{k}^a {+J_{k+1}^d}^*(Z(P_{k-1}))$. The best response becomes not jamming if $O_k^a  \geq \Tr H(P_{k-1})\varphi_k + {J_{k+1}^d}^*(Z(P_{k-1})) -{J_{k+1}^d}^*(Z(P_{k-1})-H(P_{k-1}))$, which produces cost-to-go $\Tr [Z(P_{k-1}) - H(P_{k-1})]\times\varphi_{k}+O_{k}^d {+J_{k+1}^d}^*(Z(P_{k-1}) - H(P_{k-1}))$. The defender then makes appropriate observation decision that generates the least cost-to-go by anticipating the best response of the attacker. Following this logic, we obtain that the equilibrium at stage $k$ is $({i_k^d}^* , {i_k^a}^*)=(1,1)$ if
$$
\begin{aligned}
O_k^d <& O_k^a < \Tr H(P_{k-1})\varphi_k + {J_{k+1}^d}^*(Z(P_{k-1})) \\
&-{J_{k+1}^d}^*(Z(P_{k-1})-H(P_{k-1}));\\
\end{aligned}
$$
The equilibrium is $({i_k^d}^* , {i_k^a}^*)=(1,0)$ if
$$
\begin{aligned}
O_k^d <& \Tr H(P_{k-1})\varphi_k + {J_{k+1}^d}^*(Z(P_{k-1})) \\
&-{J_{k+1}^d}^*(Z(P_{k-1})-H(P_{k-1}))\leq O_k^a;\\
\end{aligned}
$$
and $({i_k^d}^* , {i_k^a}^*)=(0,0)$ otherwise. 
\begin{table}[H]
\caption{The payoff matrix for the zero-sum game between the defender and the attacker at stage $k\geq1$.}
\label{tab:MatrixGameStagek}
\begin{tabular}{|c|c|c|c|}
\hline
\multicolumn{2}{|c|}{\multirow{2}{*}{$J_{k}^d$}} & \multicolumn{2}{c|}{$i^a_k$} \\ \cline{3-4} 
\multicolumn{2}{|c|}{}                         & 1             & 0            \\ \hline
\multirow{2}{*}{$i^d_k$} &
  1 &
  \begin{tabular}[c]{@{}c@{}}$\Tr Z(P_{k-1})\varphi_{k}$\\ $+O_{k}^d - O_{k}^a$\\ ${+J_{k+1}^d}^*(Z(P_{k-1}))$\end{tabular} &
  \begin{tabular}[c]{@{}c@{}}$\Tr [Z(P_{k-1}) - H(P_{k-1})]\times$\\ $\varphi_{k}+O_{k}^d$\\ ${+J_{k+1}^d}^*(Z(P_{k-1}) - H(P_{k-1}))$\end{tabular} \\ \cline{2-4} 
 &
  0 &
  \begin{tabular}[c]{@{}c@{}}$\Tr Z(P_{k-1})\varphi_{k}$\\ $- O_{k}^a$\\ $+{J_{k+1}^d}^*(Z(P_{k-1}))$\end{tabular} &
  \begin{tabular}[c]{@{}c@{}}$\Tr Z(P_{k-1})\varphi_{k}$\\ ${+J_{k+1}^d}^*(Z(P_{k-1}))$\end{tabular} \\ \hline
\end{tabular}
\end{table}
Hence, the strategies at stage $k$ depend solely on $P_{k-1}$. And ${J_k^d}^*$ is a function of $P_{k-1}$. Here, we assume that there is no incentive to act when the margin between two actions is zero.
\end{proof}
Even though in some circumstances, the defender will be better off if she/he can receive the observation, she/he will not observe to avoid additional cost of observation since she/he can anticipate the observation being jammed.

\begin{corollary}\label{cor:Attacker}
Suppose that at each stage $k$, the attacker announces her/his jamming decision first, then the defender reacts; i.e., $\mathcal{F}^a_k = \mathcal{F}_k$ and $\mathcal{F}_k^d = \mathcal{F}_k \cup  I_{k}^a$. The equilibrium at stage $k$ is  $({i_k^d}^* , {i_k^a}^*)=(0,0)$ if
$$
\begin{aligned}
O_k^d \geq& \Tr H(P_{k-1})\varphi_k + {J_{k+1}^d}^*(Z(P_{k-1})) \\
&-{J_{k+1}^d}^*(Z(P_{k-1})-H(P_{k-1}));
\end{aligned}
$$
the equilibrium is $({i_k^d}^* , {i_k^a}^*)=(0,1)$
$$
\begin{aligned}
O_k^d + O_k^d <& \Tr H(P_{k-1})\varphi_k + {J_{k+1}^d}^*(Z(P_{k-1})) \\
&-{J_{k+1}^d}^*(Z(P_{k-1})-H(P_{k-1}));\\
\end{aligned}
$$
and $({i_k^d}^* , {i_k^a}^*)=(1,0)$ otherwise.
\end{corollary}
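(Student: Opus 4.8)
The plan is to repeat the backward induction used for \Cref{Theorem:ObservationDynamicProgramming}, but to resolve the stage game as a Stackelberg game in which the attacker is now the leader: under the information structure $\mathcal{F}^a_k=\mathcal{F}_k$, $\mathcal{F}^d_k=\mathcal{F}_k\cup I_k^a$, the attacker commits to $i_k^a$ and the defender best-responds. The base case is $k=N-1$, where ${J_N^d}^*=0$ and the stage-$(N-1)$ payoff matrix is \Cref{tab:MatrixGameStagek} with the $J_{k+1}^d$ entries deleted; it depends on $P_{N-2}$ only. For the induction step, assume the claim holds at $k+1$, so ${J_{k+1}^d}^*$ is a function of the post-decision covariance $P_k$ alone. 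Because $h(i_k^d,i_k^a)=i_k^d(1-i_k^a)$, the propagation in \Cref{ConvariancePropagation} gives $P_k=Z(P_{k-1})-H(P_{k-1})$ in the cell $(i_k^d,i_k^a)=(1,0)$ and $P_k=Z(P_{k-1})$ in the other three cells, so \Cref{tab:MatrixGameStagek} is again the stage-$k$ payoff matrix and every entry is a function of $P_{k-1}$.

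First I would pin down the defender's best response in each column of that matrix. If the attacker jams ($i_k^a=1$), an observation is nullified and only costs the extra $O_k^d \ge 0$, so the defender best-responds with $i_k^d=0$ (using the convention that a zero margin triggers no action), with induced cost-to-go $\Tr Z(P_{k-1})\varphi_k-O_k^a+{J_{k+1}^d}^*(Z(P_{k-1}))$. If the attacker does not jam ($i_k^a=0$), writing $\Tr[Z(P_{k-1})-H(P_{k-1})]\varphi_k=\Tr Z(P_{k-1})\varphi_k-\Tr H(P_{k-1})\varphi_k$ and introducing the margin $\Delta_k:=\Tr H(P_{k-1})\varphi_k+{J_{k+1}^d}^*(Z(P_{k-1}))-{J_{k+1}^d}^*(Z(P_{k-1})-H(P_{k-1}))$ shows the defender observes iff $O_k^d<\Delta_k$, with induced cost-to-go $\Tr Z(P_{k-1})\varphi_k+O_k^d-\Delta_k+{J_{k+1}^d}^*(Z(P_{k-1}))$ when it does and $\Tr Z(P_{k-1})\varphi_k+{J_{k+1}^d}^*(Z(P_{k-1}))$ when it does not.

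Next the attacker, as leader, maximizes over $i_k^a$ against those anticipated responses. When $O_k^d\ge\Delta_k$ the defender never observes, so jamming merely pays $O_k^a \ge 0$ for nothing and the attacker chooses $i_k^a=0$, giving $({i_k^d}^*,{i_k^a}^*)=(0,0)$. When $O_k^d<\Delta_k$ the attacker compares $-O_k^a$ (jam, defender stays put) with $O_k^d-\Delta_k$ (do not jam, defender observes); jamming wins iff $O_k^d+O_k^a<\Delta_k$, producing $(0,1)$, and otherwise $(1,0)$. These three cases are exactly the stated thresholds, and since each induced cost-to-go is assembled from $P_{k-1}$ and common-knowledge quantities, ${J_k^d}^*$ is again a function of $P_{k-1}$, which closes the induction.

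There is no genuine obstacle beyond the bookkeeping already carried out for \Cref{Theorem:ObservationDynamicProgramming}; the one conceptual point worth stating is that the sequential move order guarantees existence of an equilibrium where \Cref{Prop:EquilibriumMayNotExist} had none, and that moving the attacker into the leader role replaces the $(1,1)$ outcome of \Cref{Theorem:ObservationDynamicProgramming} by $(0,1)$ — a defender who already sees the jamming has no reason to spend $O_k^d$ on a measurement that will be destroyed.
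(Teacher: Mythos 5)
Your proposal is correct and follows essentially the same route the paper uses for \Cref{Theorem:ObservationDynamicProgramming}: backward induction on ${J_k^d}^*$ as a function of $P_{k-1}$, with the stage game in \Cref{tab:MatrixGameStagek} resolved as a Stackelberg game, only with the attacker as leader, which yields exactly the three threshold cases of \Cref{cor:Attacker}. You also correctly read the second condition as $O_k^d + O_k^a < \Tr H(P_{k-1})\varphi_k + {J_{k+1}^d}^*(Z(P_{k-1})) - {J_{k+1}^d}^*(Z(P_{k-1})-H(P_{k-1}))$, i.e.\ the ``$O_k^d + O_k^d$'' in the statement is a typo.
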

Here, we use the same notation ${J_{k}^d}^*$ for games with different information structures. 

\begin{remark}
The dynamic programming equation in \Cref{Eq:DynamicProgrammingObservation} can be thought of as a dynamic programming equation for a deterministic system with state $P_k$ and controls $i^d_k,i^a_k$. The nonlinear state dynamics are $P_{k+1} = Z(P_{k}) - h(i^d_k,i^a_k)\cdot H(P_{k})$ with the stage cost being $\Tr P_{k}\varphi_k^d + i_k^d O_k^d - i_k^a O_k^a$.
\end{remark}

Note that $P_k = Z(P_{k-1}) - H(P_{k-1})$ if the observation is made and not jammed. Since $H(P_{k-1})$ is always positive semi-definite, at stage $k$, if the observation is missing, i.e., $h(i^d_k,i^a_k) =0$ and $P_k = Z(P_{k-1})$, the covariance of estimate error $P_{k}$ will be larger than the covariance when there is an observation received. Here, by saying the covariance of estimate error $P_k$ is larger than the covariance of estimate error $P_k'$, we mean $P_k \geq P_k'$. Note that the instantaneous cost at stage $k$,\ is $\Tr P_{k}\varphi_k \equiv \mathbb{E}\left[ (x_k - \hat{x}_k)'\varphi_k (x_k - \hat{x}_k) \right]$. From \Cref{Eq:ObservationEffectCoefficient,Eq:MDecomposition}, we know
$$
\varphi_k = A' L_{k+1} \begin{bmatrix}
B^d & B^a
\end{bmatrix} \Omega T \Omega' 
\begin{bmatrix}
{B^d}'\\
{B^a}'
\end{bmatrix} L_{k+1}A,
$$
where $T$, as we can see in \Cref{Eq:MDecompositionComponents}, is a block diagonal matrix with two blocks $S_B^{-1}(L_{k+1})$ and $-(R_n^a - {B^a}' L_{n+1} B^a)^{-1}$. Since the former block is positive definite and the later is negative definite,  $\varphi_k$ is neither positive semi-definite nor negative semi-definite. That means $\Tr P_{k}\varphi_k$ could be negative. The interpretation is that a larger estimate error may not be always detrimental to the defender in the presence of attacks since the observation can help the attacker inject a better control into the physical plant. Thus, even when the cost of observation $O_k^d, k=0,1,\cdots,N-1$ is zero, the defender will not have incentives to observe at each stage. However, when $R^a_k \rightarrow \infty$ which means that there will be no attacks on the physical plant due to high costs, $\varphi_k$ becomes positive semi-definite and in this circumstance, the defender will favor a smaller covariance.

A special case of our framework is when the defender can obtain perfect observations. That is when $D = \Id$ and $E =0$ or $\Sigma_o = 0$. In this circumstance, the covariance of the estimate error propagates as
{
$$
\begin{aligned}
P_{n}(\bar{\mathcal{F}}_{n}) = \begin{cases}
A P_{n-1}(\bar{\mathcal{F}}_{n-1}) A' + C\Sigma_s C',&\textrm{if }h(i_n^d,i_n^a) = 0,\\
0,\ \ &\textrm{if }h(i_n^d,i_n^a) = 1,
\end{cases}
\end{aligned}
$$
with the initial condition 
$$
\begin{aligned}
P_{0}(\bar{\mathcal{F}}_{0}) = \begin{cases}
\Sigma_0,&\textrm{if }h(i_0^d,i_0^a) = 0,\\
0,\ \ &\textrm{if }h(i_0^d,i_0^a) = 1.
\end{cases}
\end{aligned}
$$
}
The rest of the derivation will follow the results we have developed for the noisy observation case.

\subsection{Computation}
From \Cref{Subsec:ObservationStrategies}, we know that the SPNE observation strategies can be derived from a sequence of dynamic programming equations. However, it is impossible to obtain a closed-form expression of the observation value function ${J_k^d}^*$ for all $k$ due to the nonlinearity of the covariance propagation (\ref{ConvariancePropagation}) and the discrete strategies of observing and jamming at each stage. In this section, we consider two computational methods the compute the SPNE observation and jamming strategies. One is to use policy iteration type of algorithms and another is to take decisions online by leveraging pre-computed values. We focus on the case when the defender announces her/his observation decision first.

\subsubsection{Policy Iteration Type of Algorithms}\label{subsubsec:PolicyIteration}

This computational method follows the idea of policy iteration \cite{bertsekas1996neuro}. The SPNE observation solution can be obtained by the following procedure.
\begin{enumerate}
    \item Pick arbitrary observation and jamming sequences $\tilde{I}_{N-1}^d$ and $\tilde{I}_{N-1}^a$ that do not contain $(i_k^d,i_k^a) = (0,1)$(not an equilibrium for sure) for every $k=0,1,\cdots,N-1$. 
    \item Policy Evaluation: Given $\tilde{I}_{N-1}^d$ and $\tilde{I}_{N-1}^a$, generate a sequence of value functions $\tilde{J}^d_k$ and ${\tilde{J}_k^a}$ according to
    $$
    \begin{aligned}
    {J_{k}^d}^*(P) &= \Tr Z(P)\varphi_{k} +O_{k}^d - O_{k}^a {+J_{k+1}^d}^*(Z(P)),\\
    &\textrm{When } \tilde{i}_k^d =1,\tilde{i}_k^d =1;\\ 
    {J_{k}^d}^*(P) &= \Tr [Z(P)-H(P)]\varphi_{k} +O_{k}^d\\ &{+J_{k+1}^d}^*(Z(P)-H(P)),\\
    &\textrm{When } \tilde{i}_k^d =1,\tilde{i}_k^d =0;\\ 
    {J_{k}^d}^*(P) &= \Tr Z(P)\varphi_{k} {+J_{k+1}^d}^*(Z(P)),\\
    &\textrm{When } \tilde{i}_k^d =0,\tilde{i}_k^d =0.\\ 
    \end{aligned}
    $$
    \item Policy update: Given the value functions ${\tilde{J}_k^d}$ computed in 2), generate new sequences of observation and jamming decisions $I^d_{N-1}$ and $I^a_{N-1}$ according to \Cref{Eq:ThresholdObservationPolicy,Eq:ThresholdObservationPolicy1} in \Cref{Theorem:ObservationDynamicProgramming}.
    \item Replace $\tilde{I}_{N-1}^d$ and $\tilde{I}_{N-1}^a$ with $I_{N-1}^d$ and $I_{N-1}^a$ respectively in 2), and continue until $\tilde{I}_{N-1}^d = I_{N-1}^d$ and $\tilde{I}^a_{N-1}= I_{N-1}^a$.
\end{enumerate}
No claim is made that the convergence of this procedure is guaranteed. But if convergence does occur, the sequences $I_{N-1}^d$ and $I_{N-1}^a$ are equilibrium and the cost-to-go functions ${J_k^d}^*$ obey \Cref{Eq:DynamicProgrammingObservation}.

\subsubsection{Backward Enumeration Algorithm}
Leveraging the results in \Cref{Theorem:ObservationDynamicProgramming}, we can write ${J_k^d}^*(P)$ in the following manner.
{\small
$$
{J_{k}^d}^*(P)=\begin{cases}
\Tr Z(P)\varphi_{k} +O_{k}^d - O_{k}^a {+J_{k+1}^d}^*(Z(P)),\\
\textrm{if }O_k^d < O_k^a < \mathcal{T}_k(P);\\
\Tr [Z(P)-H(P)]\varphi_{k} +O_{k}^d{+J_{k+1}^d}^*(Z(P)-H(P)),\\
\textrm{if }O_k^d < \mathcal{T}_k(P) \leq O_k^a;\\
\Tr Z(P)\varphi_{k} {+J_{k+1}^d}^*(Z(P)),\\
\textrm{otherwise},
\end{cases}
$$}
with ${J_{N}^d}^* = 0$ and $\mathcal{T}_k(P)\coloneqq \Tr H(P)\varphi_k + {J_{k+1}^d}^*(Z(P)) -{J_{k+1}^d}^*(Z(P)-H(P))$. At any given stage $k$, there is only a finite number of distinct covariance $P_k$ for all possible $I_k^d$ and $I_k^a$. Thus, it is sufficient to find the SPNE observations and jamming strategies by computing and storing a finite number of values to characterize the value functions in the absence of a full characterization over the entire space of symmetric positive semi-definite matrices. For example, at stage $k=0$, we only have two possible $P_0$ which are $\Sigma_0$ and $\Sigma_0 - \Sigma_0 D'[D'\Sigma_0 D + E'\Sigma_o E]D'\Sigma_0$. At stage $k=N-1$. there are $2^{N-1}$ possible values of $P_{N-2}$ for all possible $\mathcal{F}_{N-1}$. Hence, the total number of values needed to store for a given game of horizon $N$ is
$$
\sum_{k=0}^{N-1} = 2^{N+1} -2 \leq 2^{N+1}.
$$

If the defender can obtain perfect observation, at stage $k$, there are $k+1$ possible values $P_{k-1}$. The total number of values needed is 
$$
\sum_{k=0}^{N-1} k+1= \frac{N(N-1)}{2}.
$$

\begin{remark}
In order to compute the observation decision online, we require that the maximum number of values needed to be stored is of order $O(2^N)$. When the defender can receive perfect observation, i.e., $D =I$ and $E =0$, the maximum number of values needed to be stored is of order $O(n^2)$. For $N=30$, the total number of values needed to be stored is approximately $10^{10}$. It is apparent that if players receive noisy observation, for problems of any appreciable size, the amount of storage required can easily become prohibitive. One can resort to computational method in \Cref{subsubsec:PolicyIteration} to find the SPNE strategies. However, if players do have perfect observations, the backward enumeration approach can be done in polynomial time, which becomes a better choice than the policy iteration type of approach since it can guarantee the finding of the SPNE strategies and support online decision making.
\end{remark}

\section{NUMERICAL STUDY}\label{sec:NumericalStudy}
It is instructive to present some numerical studies of the results in \Cref{Sec:TheoreticalResults}. Our focus will be given to the observation and the jamming strategies. A scalar case will suffice to illustrate the interesting nature of the observation and the jamming sequences. A scalar case will even be more illustrative in some way. Our results and computational approaches can be effortlessly applied to higher dimensions. Suppose that
$$
\begin{aligned}
x_{n+1} &= ax_n + u^d_n + u_n^a + w_n,\\
\tilde{y}_n &= x_n + v_n,\\
y_n &= i^d_n(1-i^a_n) \tilde{y}_n,\\
\Sigma_s &= 4; \Sigma_0=1, \Sigma_o = \sigma,\\
Q_n &=1, R_n^d = 1, O_n^d = o^d, O_n^d = o^a\ \ \ 0\leq n\leq N-1\\
Q_N &= 8;, R_{N-1} = 10,\\
R_n^a & = r^a,\ \ \ 0\leq n \leq N-2
\end{aligned}
$$
We choose a total running time of $N=30$. The unassigned parameters include system matrix parameter $a$, the cost of physical attackers $r^a$, the observation noise variance $\sigma$, and the observation $o^d$ and jamming cost $o^a$, which are subject to change in the experiment. The computation follows the policy iteration-based algorithm.

\begin{figure}
    \centering
    \includegraphics[width=0.8\columnwidth]{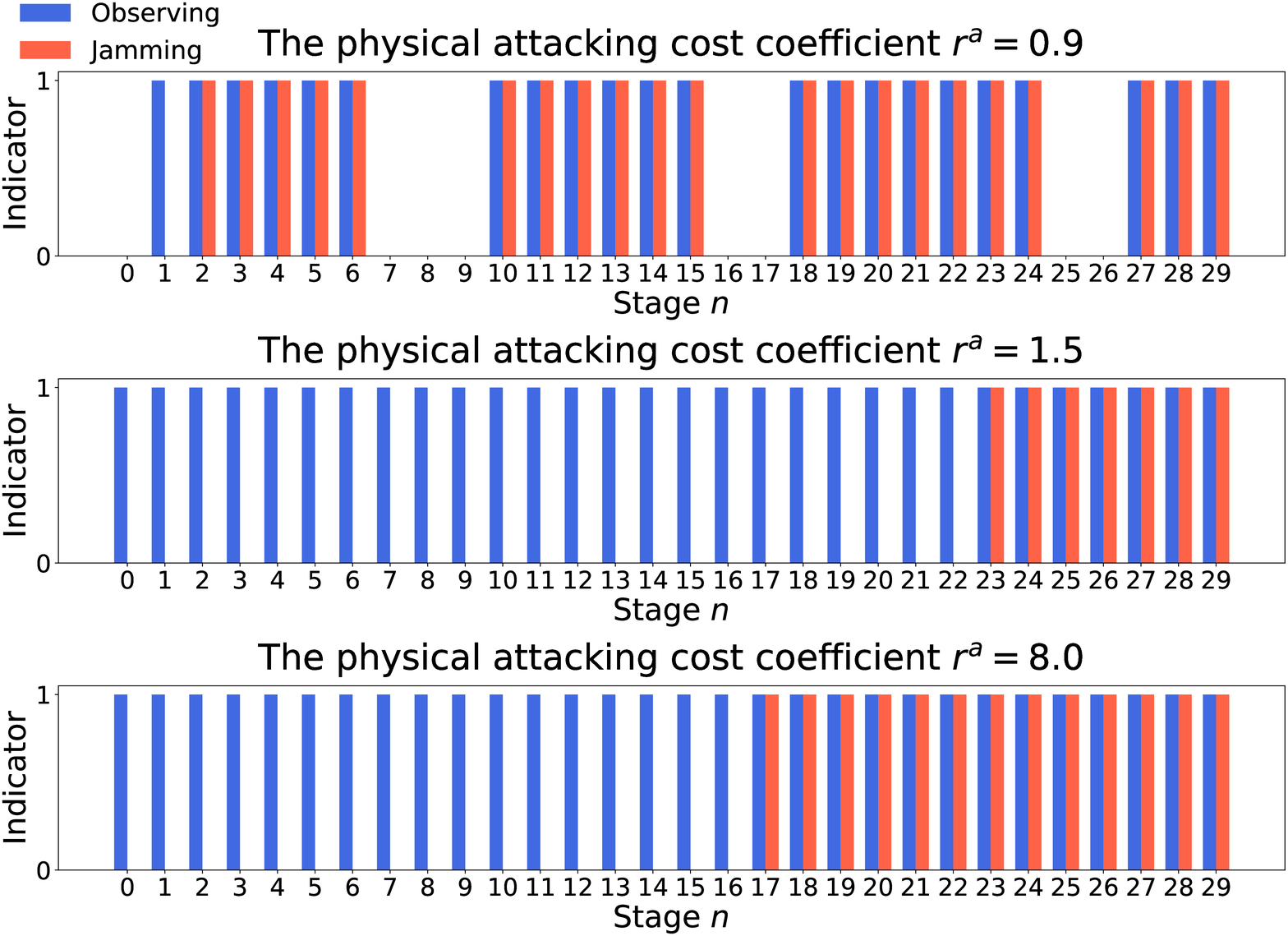}
    \caption{The observation decisions by the defender and the jamming decisions by the attacker over $30$ stages. Here, $a = 0.9$, $o_d =0$, $o_a = 15$, and  $\sigma =1$ .}
    \label{fig:ChangingAttackingCosts}
\end{figure}

\Cref{fig:ChangingAttackingCosts} shows the observation decision sequences and the jamming sequences for various costs of physical attacks. Fixed values of $a = 0.9$, $o_d =0$, $o_a = 15$, and  $\sigma =1$ are used. They give a slight stable system with zero cost of observation, cost of jamming being $15$, and variance of observation noise being $1$. The experiments are conducted for $r^a = 0.9, 1.5, \textrm{and }8.0$. The cost of attacking the physical plant has a strong impact on both the observation and the jamming decisions. When the cost of physical attacks is low, the defender does not observe at some stages even when there is no cost of observation. This is because observations at some stages will bring additional information that can be leveraged by the attacker, who is powerful in the physical side (i.e.,low cost of attacking), to compute her/his attacks on the physical plant. As the cost of physical attacks increases, undaunted by the additional information to the attacker, the defender observes at each stage when $r^a = 1.5$. As the cost of physical further increases to $r^a = 8.0$, the attacker enjoys less benefit from additional observations since her/his physical attacks are constrained by high costs. Hence, the attacker tends to jam more to prevent the defender from receiving observations.

\Cref{fig:ChangingSystemMatrix} shows the observation decision sequences and the jamming sequences under different system parameters. Fixed values of $r_a = 1.5$, $o_d =0$, $o_a = 15$, and  $\sigma =1$ are used. Sequences are shown for $a=0.5$, a highly stable system, $a=0.9$, a slightly stable system, and $a=1.1$, an unstable system. In all three cases, the defender chooses to observe at every stage because the attacks on the physical plant are limited by the high cost of attacking. Additional information will benefit the defender more. A highly stable system, say $a =0.5$, usually produces a very low-performance degradation when an observation is missing. Hence, intimidated by the cost of jamming, the attacker has no incentive to jam any of the observations. A slightly stable system can be affected by the attacker through jamming even if the cost of physical attacks is high. The attacker tends to jam economically. That is to jam near the end of the game to induce a considerable loss to the defender because $Q_N=8$ is much higher than $Q_n=1$ for $n\leq N-1$. When the defender deals with an unstable system, the attacker simply jams every stage so that the defender cannot stabilize the system due to no received observation. The defender could have chosen not to observe because the observation will be jammed anyway. But this is a zero-sum game, so the defender can at least gain a little from the attacker's cost induced by jamming.

\begin{figure}
    \centering
    \includegraphics[width=0.8\columnwidth]{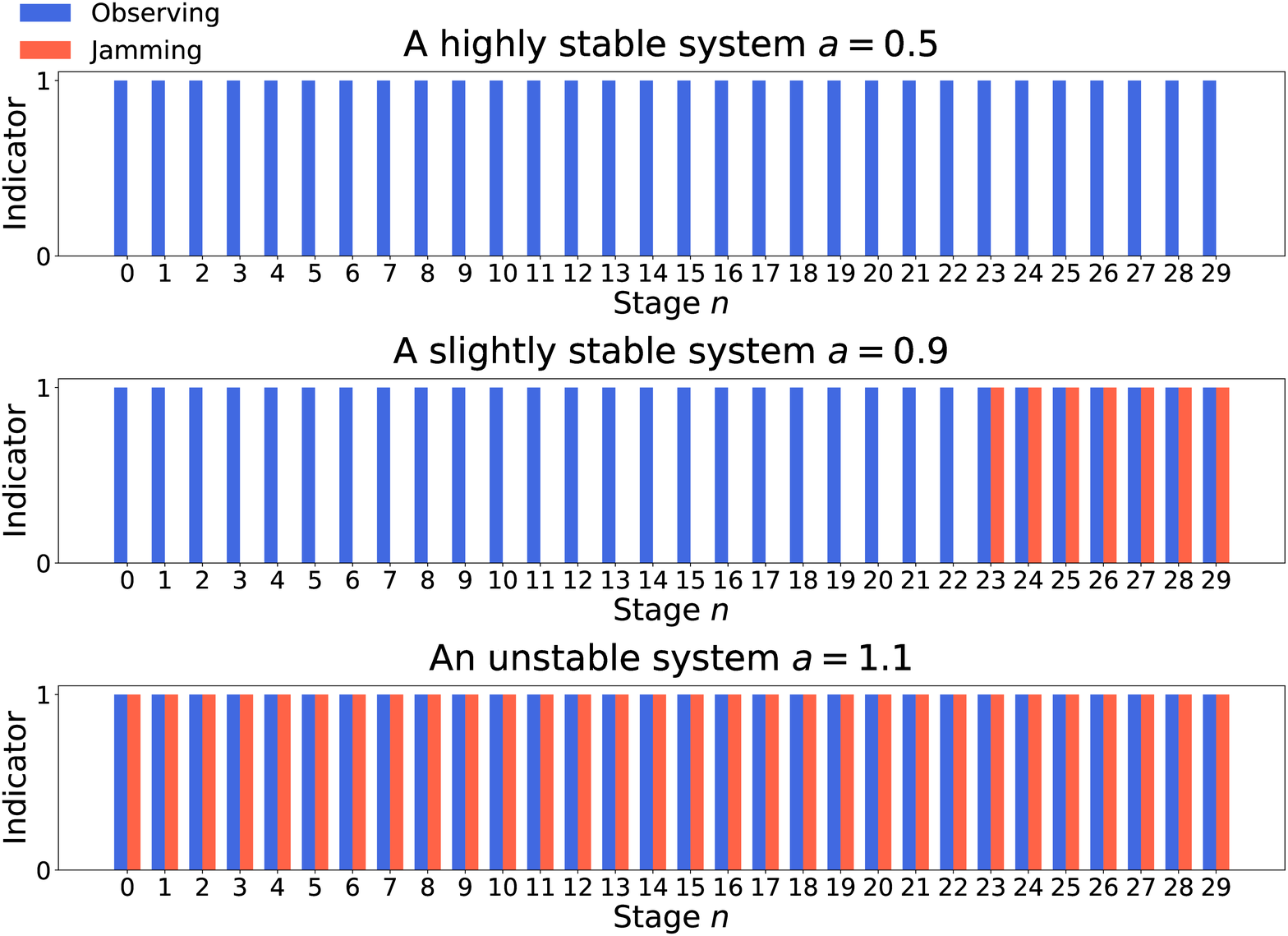}
    \caption{The Observation decisions by the defender and the jamming decisions by the attacker over $30$ stages. Here, $r_a = 1.5$, $o_d =0$, $o_a = 15$, and  $\sigma =1$ .}
    \label{fig:ChangingSystemMatrix}
\end{figure}

\Cref{fig:ChangingObvservationNoise} shows the number of observations and jammings as a function of observation noise variance $\sigma$. Fixed values $a = 1.1$, $o_d =25$, $o_a = 40$, and  $r_a =20$ are used. The cost of physical attacks is set to be high which means the attacker has limited capability at the physical side. When the cost of jamming is also high $o^a = 6000$, the game is degenerated to resemble an optimal control problem with observations that are costly and controlled. The curve regarding the number of observations shows some unusual results. As the observation noise variance $\sigma$ increases, the observation will be considered to be less valuable intuitively since it will contain less useful information about the state of the system. Grounded on this argument, one would expect the number of observations goes down monotonically to zero as $\sigma$ goes to infinity. Economically, it means that we should never pay for worthless information. However, the first blue curve of \Cref{fig:ChangingObvservationNoise} indicates that when the observation noise variance grows from a small to a moderately large value, it is better to actually increase the number of observations. This means when the information content of each individual observation is degraded slightly, it is better to pay the cost of making extra observations in order to make a better estimate. When the cost of jamming is lower, say when $o^a =40$, the defender observes more frequently when the observation noise variance is low because the defender knows that the attacker will be jamming (if the attacker does not jam, the defender will receive high-quality information to stabilize the system while the attacker can do little with the information). Doing so will render the attacker suffer more cost of jamming due to the fact that $O^a_n > O^d_n$. As the observation noise variance increases, the information becomes less useful. Only at some stages, the attacker has incentives to observe but most of these observations will be jammed since additional information is favored less by the attacker than the defender when the cost of physical attacks is high.

\begin{figure}
    \centering
    \includegraphics[width=0.8\columnwidth]{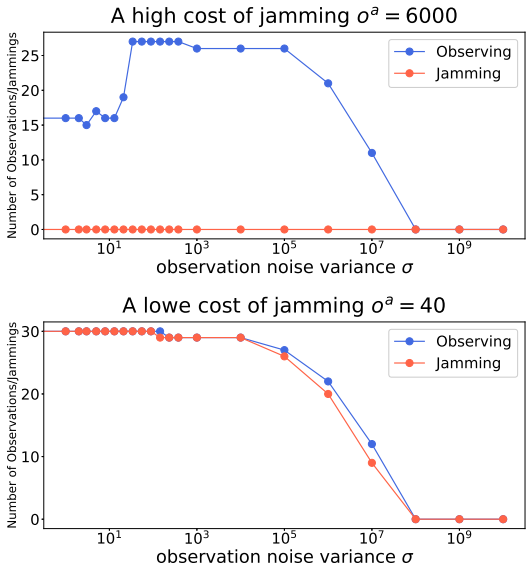}
    \caption{The number of observations and jamming attacks in $30$ stages. Here, $a = 1.1$, $o_d =25$, $o_a = 40$, and  $r_a =20$.}
    \label{fig:ChangingObvservationNoise}
\end{figure}
\section{Conclusions}

In this work, we have established a cross-layer multi-stage framework to facilitate the study of CPSs under coordinated cross-layer attacks. We have demonstrated that the framework is generic and can be specified to several classic attack models. The framework has been captured by a zero-sum linear quadratic Gaussian game with controlled observation. We have built solid theoretical underpinnings for this framework which can be used to analyze a wide variety of attacking settings. The theoretical results have shown that control performance depends on the observation and jamming strategies, which affects the quality of state estimation. Hence, the observation and jamming decisions can be carried out through dynamic equations that evolve as the estimation error variance propagates. Beyond that, the capability of altering the physical process will affect the jamming and observation decisions.

Future works will focus on the study of mixed strategies of observation and jamming, investigating the continuous-time scenario, and infinite-horizon problems.

\addtolength{\textheight}{-12cm}   









\bibliography{references}

\begin{thebibliography}{10}
\providecommand{\url}[1]{#1}
\csname url@samestyle\endcsname
\providecommand{\newblock}{\relax}
\providecommand{\bibinfo}[2]{#2}
\providecommand{\BIBentrySTDinterwordspacing}{\spaceskip=0pt\relax}
\providecommand{\BIBentryALTinterwordstretchfactor}{4}
\providecommand{\BIBentryALTinterwordspacing}{\spaceskip=\fontdimen2\font plus
\BIBentryALTinterwordstretchfactor\fontdimen3\font minus
  \fontdimen4\font\relax}
\providecommand{\BIBforeignlanguage}[2]{{%
\expandafter\ifx\csname l@#1\endcsname\relax
\typeout{** WARNING: IEEEtran.bst: No hyphenation pattern has been}%
\typeout{** loaded for the language `#1'. Using the pattern for}%
\typeout{** the default language instead.}%
\else
\language=\csname l@#1\endcsname
\fi
#2}}
\providecommand{\BIBdecl}{\relax}
\BIBdecl

\bibitem{feng2020networked}
S.~Feng, A.~Cetinkaya, H.~Ishii, P.~Tesi, and C.~De~Persis, ``Networked control
  under dos attacks: Trade-offs between resilience and data rate,'' \emph{IEEE
  Transactions on Automatic Control}, 2020.

\bibitem{foroush2012event}
H.~S. Foroush and S.~Martinez, ``On event-triggered control of linear systems
  under periodic denial-of-service jamming attacks,'' in \emph{2012 IEEE 51st
  IEEE Conference on Decision and Control (CDC)}.\hskip 1em plus 0.5em minus
  0.4em\relax IEEE, 2012, pp. 2551--2556.

\bibitem{gupta2010optimal}
A.~Gupta, C.~Langbort, and T.~Ba{\c{s}}ar, ``Optimal control in the presence of
  an intelligent jammer with limited actions,'' in \emph{49th IEEE Conference
  on Decision and Control (CDC)}.\hskip 1em plus 0.5em minus 0.4em\relax IEEE,
  2010, pp. 1096--1101.

\bibitem{zhang2015optimal}
H.~Zhang, P.~Cheng, L.~Shi, and J.~Chen, ``Optimal dos attack scheduling in
  wireless networked control system,'' \emph{IEEE Transactions on Control
  Systems Technology}, vol.~24, no.~3, pp. 843--852, 2015.

\bibitem{xu2017game}
Z.~Xu and Q.~Zhu, ``A game-theoretic approach to secure control of
  communication-based train control systems under jamming attacks,'' in
  \emph{Proceedings of the 1st International Workshop on Safe Control of
  Connected and Autonomous Vehicles}, 2017, pp. 27--34.

\bibitem{mo2013detecting}
Y.~Mo, R.~Chabukswar, and B.~Sinopoli, ``Detecting integrity attacks on scada
  systems,'' \emph{IEEE Transactions on Control Systems Technology}, vol.~22,
  no.~4, pp. 1396--1407, 2013.

\bibitem{zhu2013performance}
M.~Zhu and S.~Martinez, ``On the performance analysis of resilient networked
  control systems under replay attacks,'' \emph{IEEE Transactions on Automatic
  Control}, vol.~59, no.~3, pp. 804--808, 2013.

\bibitem{zhu2018multi}
Q.~Zhu and S.~Rass, ``On multi-phase and multi-stage game-theoretic modeling of
  advanced persistent threats,'' \emph{IEEE Access}, vol.~6, pp.
  13\,958--13\,971, 2018.

\bibitem{rass2017physical}
S.~Rass, A.~Alshawish, M.~A. Abid, S.~Schauer, Q.~Zhu, and H.~De~Meer,
  ``Physical intrusion games—optimizing surveillance by simulation and game
  theory,'' \emph{IEEE Access}, vol.~5, pp. 8394--8407, 2017.

\bibitem{huang2019continuous}
Y.~Huang, V.~Kavitha, and Q.~Zhu, ``Continuous-time markov decision processes
  with controlled observations,'' in \emph{2019 57th Annual Allerton Conference
  on Communication, Control, and Computing (Allerton)}.\hskip 1em plus 0.5em
  minus 0.4em\relax IEEE, 2019, pp. 32--39.

\bibitem{yunhanhuang2020dynamic}
Y.~Huang, J.~Chen, L.~Huang, and Q.~Zhu, ``Dynamic games for secure and
  resilient control system design,'' \emph{National Science Review}, vol.~7,
  no.~7, pp. 1125--1141, 2020.

\bibitem{noureddine2016game}
M.~A. Noureddine, A.~Fawaz, W.~H. Sanders, and T.~Ba{\c{s}}ar, ``A
  game-theoretic approach to respond to attacker lateral movement,'' in
  \emph{International Conference on Decision and Game Theory for
  Security}.\hskip 1em plus 0.5em minus 0.4em\relax Springer, 2016, pp.
  294--313.

\bibitem{horak2017manipulating}
K.~Hor{\'a}k, Q.~Zhu, and B.~Bo{\v{s}}ansk{\`y}, ``Manipulating adversary’s
  belief: A dynamic game approach to deception by design for proactive network
  security,'' in \emph{International Conference on Decision and Game Theory for
  Security}.\hskip 1em plus 0.5em minus 0.4em\relax Springer, 2017, pp.
  273--294.

\bibitem{nguyen2017multi}
T.~H. Nguyen, M.~Wright, M.~P. Wellman, and S.~Baveja, ``Multi-stage attack
  graph security games: heuristic strategies, with empirical game-theoretic
  analysis,'' in \emph{Proceedings of the 2017 Workshop on Moving Target
  Defense}, 2017, pp. 87--97.

\bibitem{chen2019control}
J.~Chen and Q.~Zhu, ``Control of multi-layer mobile autonomous systems in
  adversarial environments: A games-in-games approach,'' \emph{IEEE
  Transactions on Control of Network Systems}, 2019.

\bibitem{sargolzaei2019detection}
A.~Sargolzaei, K.~Yazdani, A.~Abbaspour, C.~D. Crane~III, and W.~E. Dixon,
  ``Detection and mitigation of false data injection attacks in networked
  control systems,'' \emph{IEEE Transactions on Industrial Informatics},
  vol.~16, no.~6, pp. 4281--4292, 2019.

\bibitem{huang2021pursuit}
Y.~Huang and Q.~Zhu, ``A pursuit-evasion differential game with strategic
  information acquisition,'' \emph{arXiv preprint arXiv:2102.05469}, 2021.

\bibitem{kalman1960new}
\BIBentryALTinterwordspacing
R.~E. Kalman, ``{A New Approach to Linear Filtering and Prediction Problems},''
  \emph{Journal of Basic Engineering}, vol.~82, no.~1, pp. 35--45, 03 1960.
  [Online]. Available: \url{https://doi.org/10.1115/1.3662552}
\BIBentrySTDinterwordspacing

\bibitem{cooper1971optimal}
C.~Cooper and N.~Hahi, ``An optimal stochastic control problem with observation
  cost,'' \emph{IEEE Transactions on Automatic Control}, vol.~16, no.~2, pp.
  185--189, 1971.

\bibitem{jacod2012probability}
J.~Jacod and P.~Protter, \emph{Probability essentials}.\hskip 1em plus 0.5em
  minus 0.4em\relax Springer Science \& Business Media, 2012.

\bibitem{zhang2006schur}
F.~Zhang, \emph{The Schur complement and its applications}.\hskip 1em plus
  0.5em minus 0.4em\relax Springer Science \& Business Media, 2006, vol.~4.

\bibitem{bertsekas1996neuro}
D.~P. Bertsekas and J.~N. Tsitsiklis, \emph{Neuro-dynamic programming}.\hskip
  1em plus 0.5em minus 0.4em\relax Athena Scientific, 1996.

\end{thebibliography}
\bibliographystyle{IEEEtran}

\end{document}